
\documentclass{llncs}

 
\usepackage{microtype}

\usepackage{amssymb,amsmath,amsfonts,graphicx,psfig}
\usepackage{times}
\usepackage{xspace}
\usepackage{algorithm}
\usepackage[noend]{algorithmic}
\usepackage{color}
\usepackage{xcolor}


\bibliographystyle{plain}



\newcommand{\opt}{\ensuremath{\operatorname{\textsc{Opt}}}\xspace}
\newcommand{\optp}{\ensuremath{\operatorname{P}}\xspace}
\newcommand{\pdd}{\ensuremath{\operatorname{Q}}\xspace}
\newcommand{\optsize}{\ensuremath{\operatorname{\textsc{Opt}}(\sigma)}\xspace}

\newcommand{\A}{\ensuremath{\operatorname{\mathbb{A}}}\xspace}
\newcommand{\B}{\ensuremath{\operatorname{\mathbb{B}}}\xspace}
\newcommand{\alg}{\ensuremath{\operatorname{\mathbb{A}}}\xspace}
\newcommand{\FF}{\ensuremath{\operatorname{\textsc{Ff}}}\xspace}
\newcommand{\NF}{\ensuremath{\operatorname{\textsc{Nf}}}\xspace}
\newcommand{\BF}{\ensuremath{\operatorname{\textsc{Bf}}}\xspace}
\newcommand{\HA}{\ensuremath{\operatorname{\textsc{Ha}}}\xspace}
\newcommand{\BSGA}{\ensuremath{\operatorname{\textsc{Bsga}}}\xspace}
\newcommand{\BSA}{\ensuremath{\operatorname{\textsc{Bsa}}}\xspace}

\newcommand{\SET}[1]{\{#1\}}

\newcommand{\WEHAVE}{\!:\;}
\newcommand{\REALS}{\mathbb{R}}

\newcommand{\cmnt}[1]{}

\def\oh#1{O \left(#1\right)}

\begin{document}

\title{Online Bin Packing with Advice\thanks{The work of the first and third author was partially supported by the Danish Council for Independent Research, Natural Sciences and the Villum Foundation, and most of the work was carried out while these authors were visiting the University of Waterloo.}}

%



\author{Joan Boyar\inst{1},
Shahin Kamali\inst{2},
Kim S. Larsen\inst{1},
Alejandro L\'opez-Ortiz\inst{2}}
\date{}
\pagestyle{plain}

\institute{
University of Southern Denmark, Denmark
\and
University of Waterloo, Canada.
}
\maketitle


\begin{abstract}
We consider the online bin packing problem under the advice complexity model where the ``online constraint'' is relaxed and an algorithm receives partial information about the future requests. We provide tight upper and lower bounds for the amount of advice an algorithm needs to achieve an optimal packing.
We also introduce an algorithm that, when provided with $\log n + o(\log n)$ bits of advice, achieves a competitive ratio of $3/2$ for the general problem. This algorithm is simple and is expected to find real-world applications. We introduce another algorithm that receives $2n + o(n)$ bits of advice and achieves a competitive ratio of $4/3 + \varepsilon$. Finally, we provide a lower bound argument that implies that advice of linear size is required for an algorithm to achieve a competitive ratio better than $9/8$.
\end{abstract}


\section{Introduction}
In the classical one-dimensional bin packing problem the goal is to pack a given sequence of \textit{items} into a minimum number of \textit{bins} with fixed and equal capacities. For convenience, it is assumed that items sizes are in the range $(0,1]$ and the capacities of bins are $1$. In the \textit{online} version of the problem, the items are revealed one by one, and an algorithm must pack each item without any knowledge about future items. The decisions of an online algorithm are irrevocable, i.e., it is not possible to move an item from one bin to another after it is \textit{packed} in a bin.

The online bin packing problem has many applications in practice, from loading trucks subject to weight limitations to creating file backups in removable media~\cite{CoGaJo97}.
Heuristics that have been proposed for the problem include Next-Fit (\NF), First-Fit (\FF), Best-Fit (\BF), and the Harmonic-based class of algorithms. \NF maintains a single \emph{open} bin and places an item in that bin; in the case the item does not fit, it \emph{closes} the bin and opens a new one. \FF keeps a list of bins in the order they are opened, packs an item in the first bin that has enough space, and opens a new bin if necessary. \BF performs similarly to \FF, except that the bins are ordered in increasing order of their remaining capacity. 
Harmonic-based algorithms are based on the idea of packing items of similar sizes together in a bin. For Harmonic$_K$, an item has type $i$ ($1\leq i\leq K-1$) if it is in the range $(\frac{1}{i+1},\frac{1}{i}]$, and type $K$ if it is in the range $(0,\frac{1}{K}]$.
The algorithm applies the \NF strategy for items of each type separately.

As for other online problems, the standard method for comparing bin packing algorithms is competitive analysis. Under competitive analysis, the performance of an algorithm \alg is compared to that of \opt, which is the optimal offline algorithm. More precisely, the competitive ratio of an algorithm \alg is the asymptotically maximum ratio of the cost of \alg to that of \opt for serving the same sequence $\sigma$. \FF and \BF have the same competitive ratio of $1.7$, while the best Harmonic-based algorithm has a competitive ratio of at most $1.58889$~\cite{Seid02}. It is also known that no online algorithm can have a competitive ratio better than $1.54037$~\cite{BalBek12}.

The total lack of information about the future is unrealistic in many real-world scenarios~\cite{EmekFraKorRos2011}. A natural approach for addressing this issue is to relax the problem by providing extra information about the input sequence. For the online bin packing problem, such relaxations have been studied in the contexts of \textit{lookahead}, in which the online algorithm can look at the items arriving in the near future~\cite{Grove95}, and \textit{closed bin packing}, in which the length of the request sequence is known to the online algorithm~\cite{AsgAye02}. In both cases, the average performance of the online algorithm improves, compared to the online algorithms with no information about the future.

The advice complexity model for online algorithms is a more general framework under which the ``no knowledge assumption'' behind online algorithms is relaxed, and the algorithm receives some bits of \textit{advice} about the future requests. The advice can be any information about the input sequence and is generated by an offline oracle which has unbounded computational power. Provided with the appropriate advice, the online algorithms are expected to achieve improved competitive ratios.
The advice model has received significant attention since its introduction~\cite{BocKomKra09,HromKraKra10,EmekFraKorRos2011,BocKomKra11,KommKra11,RenaRosen11,BocKomKra12,BiaBoc12,DebKraMak12,ForKelSte12,KomKraMom12,BocHroKom13,BianBoc13,SieSprUng13}.

In this paper, we study the advice complexity of the online bin packing problem. Our interest in studying the problem under this setting is mostly theoretical. Nevertheless, in many practical scenarios, it can be justified to allow a fast offline oracle to take a ``quick look'' at the input sequence and send some advice to the online algorithm. For example, it may be possible to take a quick look and count the number of items which are larger than $1/2$ and smaller than $2/3$ of the bin capacity. We show that this form of advice can be used to achieve an algorithm which outperforms all online algorithms.


\subsection{Model}
\label{the-model}
In the last few years, slightly different models of advice complexity have been proposed for online problems. All these models assume that there is an offline oracle with infinite computational power, which provides the online algorithm with some bits of advice. How these bits of advice are given to the algorithm is the source of difference between the models. In the first model, presented in~\cite{DobKraPar09}, an online algorithm poses a series of questions which are answered by the offline oracle in \textit{blocks of answers}. The total size of the answers, measured in the number of bits, defines the advice complexity. The problem with this model is that a lot of information can be encoded in the individual length of each block. To address this issue, another model is proposed in~\cite{EmekFraKorRos2011} which assumes that online algorithms receive a fixed number of bits of advice per request. We call this model the \textit{advice-with-request model}. This model is studied for problems, such as metrical task systems and $k$-server, and the results tend to use at least a constant number of bits of advice per request~\cite{EmekFraKorRos2011,RenaRosen11}. \cmnt{under advice-per-request model, the advice is mostly constant per request}
Nevertheless, there are many online problems for which a sublinear and even a constant number of bits of advice in total is sufficient to achieve good competitive ratios. However, under the advice-with-request model, the possibility of sending a sublinear number of advice bits to the algorithm is not well defined. In~\cite{BocKomKra09,BocKomKra11} another model of advice complexity is presented which assumes that the online algorithm has access to an \textit{advice tape}, written by the offline oracle. At any time step, the algorithm may refer to the tape and read any number of advice bits. The advice complexity is the number of bits on the tape accessed by the algorithm. We refer to this model as \textit{advice-on-tape model}. Since its introduction, the advice-on-tape model has been used to analyze the advice complexity of many online problems including paging~\cite{BocKomKra09,HromKraKra10,KommKra11}, disjoint path allocation~\cite{BocKomKra09}, job shop scheduling~\cite{BocKomKra09,KommKra11}, $k$-server~\cite{BocKomKra11,RenaRosen11}, knapsack~\cite{BocKomKra12},
various coloring problems~\cite{BiaBoc12,ForKelSte12,BianBoc13,SieSprUng13},
set cover~\cite{KomKraMom12,BocHroKom13}, maximum clique~\cite{BocHroKom13}, and graph exploration~\cite{DebKraMak12}.

Under the advice-on-tape model, we require a mechanism to infer how many bits of advice the algorithm should read at each time step. This could be implicitly derived during the execution of the algorithm or explicitly encoded in the advice string itself. For example, we may use a \textit{self-delimited} encoding as used in \cite{BocKomKra11}, in which the value of a non-negative integer $X$ is encoded by writing the value of $\lceil \log (\lceil \log (X+1) \rceil+1) \rceil$ in unary (a string of 1's followed by a zero), the value of $\lceil \log (X+1) \rceil$ in binary \footnote{In this paper we use $\log n$ to denote $\log_2(n)$.}, and the value of $X$ in binary. These codes respectively require $\lceil\log (\lceil\log (X+1) \rceil+1) \rceil +1$, $\lceil\log (\lceil\log (X+1) \rceil+1) \rceil$, and 
$\lceil \log (X+1) \rceil$ bits. Thus, the self-delimited encoding of $X$ requires
\[ \hspace{1.5cm} e(X) = \lceil \log (X+1) \rceil+ 2\lceil\log (\lceil\log (X+1) \rceil+1) \rceil + 1 \] 
bits.
The existence of self-delimited encodings at the beginning of the tape usually adds a lower-order term to the number of advice bits required by an algorithm.

Regarding notation, we use $\alg(\sigma)$ to denote the costs of \alg for packing a request sequence $\sigma$. When $\sigma$ follows from the context, we simply use $\alg$ to denote this cost. We use similar notation for all algorithms, including \opt.

We consider the bin packing problem under the advice-on-tape model, which is formally defined as follows, based on the definition of the advice model in~\cite{BocKomKra11}:
\begin{definition}
In the {\em online bin packing problem with advice}, the input is a sequence of items $\sigma = \langle x_1, \ldots, x_n \rangle$, revealed to the algorithm in an online manner $(0<x_i\leq 1)$. The goal is to pack these items in the minimum number of bins of unit size. At time step $t$, an online algorithm should pack item $x_t$ into a bin. The decision of the algorithm to select the target bin is a function of $\Phi, x_1, \ldots, x_{t-1}$, where $\Phi$ is the content of the advice tape. An algorithm \alg is {\em $c$-competitive with advice complexity
$s(n)$} if there exists a constant $c_0$ such that, for all $n$ and for all input sequences $\sigma$ of length at most $n$, there exists some advice $\Phi$ such that $\alg(\sigma) \leq c \ \optsize + c_0$, and at most the first $s(n)$ bits of $\Phi$ have been accessed by the algorithm. If $c=1$ and $c_0 = 0$, then \alg is {\em optimal}.
\end{definition}


\subsection{Contribution}
We answer different questions about the advice complexity of the online bin packing problem. First, we study how many bits of advice are required to achieve an optimal solution. We consider two different settings of the problem. When there is no restriction on the number of distinct items or their sizes, we present the easy result that $n \lceil \log \optsize \rceil$ 
bits of advice are sufficient to achieve an optimal solution, where $\optsize$ is the number of bins in an optimal packing. We also prove that at least $(n - 2 \optsize) \log \optsize$ bits of advice are required to achieve an optimal solution.

When there are $m$ distinct items in the sequence, we prove that at least $(m-3) \log n -2 m \log m$ bits of advice are required to achieve an optimal solution. If $m$ is a constant, there is a linear time online algorithm that receives $m \log n + o(\log n)$ bits of advice and achieves an optimal solution. We also show that, even if $m$ is not a constant, there is a polynomial time online algorithm that receives $m \lceil \log (n+1) \rceil + o(\log n)$ bits of advice and achieves a packing with $(1+ \varepsilon)\optsize + 1$ bins. 

We also study a relevant question that asks how many bits of advice are required to perform strictly better than all online algorithms. We bound this by providing an algorithm which receives $\log n + o(\log n)$ bits of advice and achieves a competitive ratio of $3/2$. Recall that any online bin packing algorithm has a competitive ratio of at least $1.54037$~\cite{BalBek12}. Hence, our algorithm outperforms all online algorithms.  

Moreover, we introduce an algorithm that receives $2n + o(n)$ bits of advice and achieves a competitive ratio of $4/3+\varepsilon$, for any fixed value of $\varepsilon>0$. \cmnt{In our algorithm, we need $\varepsilon < 1/11$}
We also prove a lower bound that implies that a linear number of bits of advice are required to achieve a competitive ratio of $9/8 - \delta$ for any fixed value of $\delta>0$.

%
\section{Optimal Algorithms with Advice}\label{secOpt}

In this section we study the amount of advice required to achieve an optimal solution. We first investigate the theoretical setting in which there is no restriction on the number of distinct items or on their sizes.
We observe that there is a simple algorithm that receives $n \lceil \log \optsize \rceil$ 
bits of advice and achieves an optimal solution. Such an algorithm basically reads $\lceil \log \optsize \rceil$ bits for each item, encoding the index of the bin that includes the item in an optimal packing. We show that the upper bound given by this algorithm is tight up to lower order terms, when $n-2\optsize\in\Theta(n)$.

\begin{theorem}\label{thUpperTheo}
To achieve an optimal packing for a sequence of size $n$ and optimal cost $\optsize$, it is sufficient to receive $n \lceil \log \optsize \rceil$ bits of advice. Moreover, any deterministic online algorithm requires at least $(n-2\optsize) \log \optsize$ bits of advice to achieve an optimal packing.
\end{theorem}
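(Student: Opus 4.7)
The upper bound is immediate: the oracle fixes any optimal packing $\mathcal{P}$ and writes, for each of the $n$ items in order, the index (in $\{1,\dots,\optsize\}$) of its bin in $\mathcal{P}$ using exactly $\lceil \log \optsize \rceil$ bits. The online algorithm reads the next $\lceil \log \optsize \rceil$ bits as each item arrives and places the item accordingly, reproducing $\mathcal{P}$. The total advice is $n \lceil \log \optsize \rceil$ bits.

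For the lower bound, my plan is an information-theoretic counting argument. I would exhibit a family $\mathcal{F}$ of at least $\optsize^{n-2\optsize}$ input sequences, each of length $n$ and optimal cost $\optsize$, such that no two members of $\mathcal{F}$ can be handled optimally by the same advice string. Since a deterministic algorithm reading $s$ advice bits has at most $2^s$ distinct behaviors, this yields $2^s \geq |\mathcal{F}|$, and hence $s \geq (n-2\optsize)\log \optsize$.

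To build $\mathcal{F}$, I would use sequences whose first $2\optsize$ items are identical across $\mathcal{F}$ and, via small staggered perturbations $\varepsilon_1 < \dots < \varepsilon_{\optsize}$ in their sizes, admit many distinct feasible two-per-bin packings, each leaving the $\optsize$ bins with different residual capacities determined by the chosen pairing. Each sequence in $\mathcal{F}$ would be indexed by a string $\vec{s} \in \{1,\dots,\optsize\}^{n-2\optsize}$, and its remaining $n-2\optsize$ items chosen so that a unique optimal packing exists only when the prefix is paired in the particular way that leaves, for each $j$, the $j$-th suffix item exactly fitting the residual of the bin indexed by $s_j$. Thus distinct strings $\vec{s}$ demand mutually exclusive prefix commitments, forcing distinct advice. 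The main obstacle is the joint calibration of prefix perturbations and suffix sizes so that (i) each $\vec{s}$ is realized by a unique optimal packing, (ii) no two $\vec{s}$'s share a viable prefix pairing, and (iii) the $\optsize^{n-2\optsize}$ resulting packings really are pairwise incompatible on the shared prefix, which is what converts the raw count into a lower bound on advice.
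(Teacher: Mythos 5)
Your upper bound is the paper's idea, but as stated it has a parsing gap: in the advice-on-tape model the algorithm must know how many bits to read per item, i.e., it must know $\lceil \log \optsize \rceil$, and your scheme gives it no way to learn this. The paper handles it by writing bin indices only for the first $n-2$ items, spending the saved $2\lceil\log\optsize\rceil$ bits on a unary encoding of $\lceil\log\optsize\rceil$, and finishing the last two items with Best-Fit; without some such device your scheme exceeds $n\lceil\log\optsize\rceil$ by a lower-order term and does not match the stated bound.

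The lower bound is where the real problem lies. Your architecture places the $\optsize^{\,n-2\optsize}$-way choice in a common prefix of only $2\optsize$ items packed two per bin. A deterministic algorithm with a fixed advice string behaves identically on identical prefixes, so all the advice can buy on the shared prefix is one of the at most $(2\optsize-1)!!\le (2\optsize)^{\optsize}$ pairings of those $2\optsize$ items. When $n-2\optsize\in\Theta(n)$ this is far smaller than $\optsize^{\,n-2\optsize}$, so by pigeonhole many of your strings $\vec{s}$ must demand the same prefix pairing, and your requirement (ii) (no two $\vec{s}$'s share a viable prefix pairing) is unsatisfiable at that cardinality. Worse, once the prefix is packed correctly, each of your suffix items fits exactly one residual, so the algorithm can place them greedily with no further advice; your family therefore forces only on the order of $\optsize\log\optsize$ advice bits, not $(n-2\optsize)\log\optsize$. (There is also the arithmetic obstruction that $n-2\optsize$ suffix items cannot each ``exactly fit the residual'' of one of only $\optsize$ bins.) The paper inverts the roles: the shared prefix consists of $n-\optsize$ small items of sizes $1/4, 1/8, \dots, 1/2^{\,n-\optsize+1}$, each strictly larger than the sum of all subsequent small items (which is what forces the optimal packing to be unique), and only the final $\optsize$ large items, each exactly topping up its bin, vary across the family. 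The $\optsize^{\,n-2\optsize}$ required packings then already differ on the shared prefix, which is exactly what converts the count into an advice lower bound. You would need to restructure your construction along these lines; no calibration of perturbations rescues the current design.
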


\begin{proof}
\ \\
\textit{Upper Bound:}
Consider an offline oracle that knows an optimal packing (note that such an oracle has unbounded computational power). This oracle simply writes on the advice tape, for each item $x$, except for the last two, the index of the bin in an optimal packing that $x$ is packed in.
To pack any item $x$, the online algorithm simply reads the index of the bin that $x$ should be packed in and packs $x$ accordingly. For the last two items, the algorithm simply uses Best-Fit. Since the packing is the same as one for an optimal algorithm up to that point, if it is impossible to fit both of the remaining items in the bins already used, Best-Fit will ensure that at least one fits if that is possible. If both of the remaining items fit in the same already open bin, it is fine to put the first one of the last two items anywhere it fits, since there will still be space remaining for the last. If both of the remaining items fit in open bins, but should be in different bins, using Best-Fit will ensure that they are both placed there. \\
This requires $\lceil \log \optsize \rceil$ bits of advice per item which sums up to $(n-2) \lceil \log \optsize \rceil$ bits of advice. 
The algorithm should also know the value of $X = \lceil \log \optsize \rceil$ in order to read the appropriate number of bits on each request. This can be done by encoding $X$ in unary and terminating with a zero. This uses no more than $2\lceil \log \optsize \rceil$ bits.
Consequently the number of advice bits used by the algorithm is $n \lceil \log \optsize \rceil$ as stated by the theorem. \ \\ \ \\
\textit{Lower Bound:} 
Consider a set $S = \SET{ \sigma_1, \ldots, \sigma_N }$ of sequences, so that each $\sigma_r$ has length $n$ for $1 \leq r \leq N$. Let $1\leq k\leq n-1$.
Each sequence $\sigma_r$ in the set has the form $$\left\langle \frac{1}{4}, \frac{1}{ 8}, \frac{1}{16}, \ldots ,\frac{1}{2^{n-k+1}}, u^r_1, u^r_2, \ldots ,u^r_{k}\right\rangle$$ in which $u^r_1, \ldots ,u^r_{k}$ are defined from a set $V$ of vectors in form $V_r = (v^r_1=1, v^2_2=2, \ldots, v^r_k=k, v^r_{k+1}, v^r_{k+2}, \ldots, v^r_{n-k})$ such that each $ v^r_h \in \SET{ 1, \ldots, k } $ for $1 \leq h \leq n-k$. 

\begin{quote}
For example, when $n=8$ and $k = 3$, the vector $(1,2,3,2,1)$ is a vector in $V$. 
\end{quote}

We associate with each vector $V_r \in V$ a sequence $\sigma_r \in S$. For a vector $V_r \in V$ and bin $j$, define $ u^r_j = 1 - \sum\limits_{\substack{1 \leq i \leq n-k \\ v_i^r=j}} a_i $, where $a_i$ is the $i$th item in the sequence $\sigma_r$, i.e., $a_i = \frac{1}{2^{i+1}}$. 
Note that all $u_j$s are strictly larger than $0.5$. Clearly, ${\textsc{Opt}}(\sigma_r) = k$ for all $r$.
We refer to the first $n-\optsize$ items as \textit{small} items and the last $\optsize$ items as \textit{large} items.

\begin{quote}
For example, assume $n=8$ and $\optsize = 3$. For a vector $V_r = (1,2,3,2,1)$, we have $u^r_1 = 1 - (\frac{1}{4} + \frac{1}{64}) = 0.734375$, $u^r_2 = 1 - (\frac{1}{8} + \frac{1}{32}) = 0.84375$, and $u^r_3 = 1 - \frac{1}{16} = 0.9375$. Hence, the sequence $\sigma_r$ associated with $V_r$ is $\langle \frac{1}{4}, \frac{1}{8}, \frac{1}{16}, \frac{1}{32}, \frac{1}{64}, 0.734375, 0.84375, 0.9375 \rangle$.
\end{quote}

In fact, $V_r$ indicates in which bin each of the first $n-\optsize$ items of $\sigma_r$ should be packed, and at the end, $u^r_j$ fills the empty space of the $j$th bin to capacity to achieve an optimal packing $P$ for a given sequence (it is optimal since all bins are fully packed). The restriction that the sequence starts with $k$ distinct items  
ensures that we do not need to consider permutations of the bins in $P$ as additional optimal packings. We claim that $P$ is the unique optimal packing. Suppose there is another optimal packing $P'$. Observe that each bin includes at most one large item, and indeed exactly one since we assume it is also optimal.
Let $a_i (1\leq i \leq n-\optsize)$ be the first item which is packed in some other bin in $P'$ than the one prescribed by $P$. Consider the bin $B$ that $a_i$ is packed into in $P$. This bin cannot be fully packed in $P'$ since $a_i$ is strictly larger than the total size of all remaining small items, i.e., even if we put all of them in the empty space of $a_i$, there is still some empty space in $B$. As a result $P'$ cannot be optimal. Hence there is unique solution for packing each sequence in the set $S$.

Note that there are 
 $N = \optsize ^ {n-2\optsize}$ 
sequences $S$. 
We claim that these sequences need separate advice strings. Suppose otherwise, and let $\sigma_r, \sigma_{r'} \in S$ $(r\neq r')$ be two different sequences with the same advice string. Note that the first $n-\optsize$ items in these sequences are the same. Since the online algorithm performs deterministically and we assume it receives the same advice for both $\sigma_{r}$ and $\sigma_{r'}$, the partial packings of the algorithms after serving the first $n-\optsize$ items are the same for both sequences. However, as discussed earlier, this implies that the final packing of the algorithm is different from the optimal packing prescribed by $V_{r''}$ for at least one of the sequences. As discussed, such a packing is the unique optimal packing and deviating from that increases the cost of the algorithm by at least one unit. As a result, the algorithm performs non-optimally for at least one of $\sigma_{r}$ or $\sigma_{r'}$. We conclude that the sequences in the set $S$ need separate advice strings. Since there are 
$N = \optsize ^ {n-2\optsize}$ 
sequences in $S$, at least 
$\log (\optsize ^ {n-2\optsize})= (n-2\optsize) \log \optsize$ 
bits of advice are required to get that many distinct advice strings.
\qed
\end{proof}

Next, we consider a more realistic scenario where there are $m \in o(n)$ distinct items and the values of these items are known to the algorithm. 
Assume that the advice tape specifies the number of items of each size. If we are not concerned about the running time of the online algorithm, there is enough information to obtain an optimal solution. If we are concerned, we can use known results for solving the offline problem ~\cite{BaewEck83,VegaLuk81,Vazira04}. We formalize this in what follows. 

\begin{lemma}[\cite{BaewEck83}] \label{lemVaz}
Consider the restriction of the bin packing problem to instances in which
the number of distinct item sizes is a constant non-negative integer $m$. There is a linear time algorithm that optimally solves this restricted problem.
\end{lemma}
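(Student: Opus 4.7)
The plan is to exploit the constant number $m$ of distinct sizes in order to reduce the problem to an integer linear program of constant dimension, which can then be solved in time sublinear in $n$, leaving the overall running time dominated by a pair of linear scans of $\sigma$.

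First, I would pass once over the input sequence and compute the tuple $(n_1,\ldots,n_m)$, where $n_i$ is the number of items of size $s_i$. This preprocessing costs $O(n)$ time and summarizes the entire instance in $m$ integers, each of magnitude at most $n$. Next, I would enumerate the set of feasible \emph{bin configurations} $T=\{t_1,\ldots,t_M\}$, where a configuration $t=(t^1,\ldots,t^m)\in\mathbb{Z}_{\geq 0}^m$ describes a legal packing of a single bin, i.e., $\sum_{i=1}^{m} t^i s_i \leq 1$. Since each coordinate satisfies $t^i \leq \lfloor 1/s_{\min}\rfloor$ with $s_{\min}=\min_i s_i>0$, the cardinality $M$ is bounded by a function of $m$ and the fixed sizes alone, and is therefore a constant.

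Any packing of the instance then corresponds to a nonnegative integer vector $(x_1,\ldots,x_M)$ satisfying $\sum_{j=1}^{M} t_j^i\, x_j = n_i$ for every $i\in\{1,\ldots,m\}$, with total cost $\sum_j x_j$. So optimal bin packing with $m$ item sizes reduces to minimizing $\sum_j x_j$ subject to these $m$ equality constraints, a constant-dimension integer linear program whose coefficients have encoding length $O(\log n)$. By Lenstra's algorithm for integer programming in fixed dimension, an optimizer $(x_1^\ast,\ldots,x_M^\ast)$ can be found in time polynomial in $\log n$, which is $o(n)$. Finally, I would output the packing by rescanning $\sigma$ and dispatching each item to a bin whose prescribed configuration still has room for it; this step is clearly $O(n)$.

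The main obstacle is ensuring that the ILP step does not dominate the running time. This rests on Lenstra's theorem, whose applicability here is precisely the point where the hypothesis that $m$ is a constant is used: both the number of variables $M$ and the number of constraints $m$ depend only on the (fixed) list of distinct sizes, so the dependence on $n$ in the ILP solver is only polylogarithmic. Combined with the two linear passes for counting and for outputting the packing, this gives the claimed $O(n)$ total running time.
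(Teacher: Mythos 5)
The paper offers no proof of this lemma at all: it is imported verbatim from the cited reference, so there is no internal argument to compare yours against. Judged on its own, your configuration--ILP argument is the standard modern route to this result and is essentially sound: counting multiplicities in one pass, enumerating feasible bin configurations, solving the resulting fixed-dimension integer program with Lenstra's algorithm in time $\mathrm{poly}(\log n)$, and dispatching items in a second pass does yield an $O(n)$ algorithm. (The cited 1983 paper necessarily uses a more elementary enumeration, since it is contemporaneous with Lenstra's algorithm, but that is a difference of machinery, not of substance.)

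The one point you should make explicit rather than slide past is the step ``$M$ is bounded by a function of $m$ and the fixed sizes alone, and is therefore a constant.'' This is only true if the $m$ distinct sizes themselves are treated as fixed (or at least bounded below by a constant), since $M \leq (1+\lfloor 1/s_{\min}\rfloor)^m$ and $1/s_{\min}$ can grow with $n$ if the sizes are part of the input. This is not a pedantic quibble: if one reads the lemma as ``$m$ distinct sizes given as part of the input,'' then bin packing with constantly many item sizes was a long-standing open problem, resolved in polynomial (not linear) time only by Goemans and Rothvoss in 2014, and your Lenstra step would have a non-constant number of variables. The reading under which your proof (and the cited result) is correct is the one the surrounding text of the paper supports --- the item sizes are known and fixed, and in the paper's later application they are bounded below by $1/(2m)$ --- but your write-up should state that hypothesis as the place where constancy of $M$ comes from. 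With that caveat added, the proof is complete.
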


If there are more than a constant number of distinct items sizes, we can solve
the problem almost optimally if the item sizes are lower bounded by a fixed value $\varepsilon$.

\begin{lemma}[\cite{VegaLuk81,Vazira04}] \label{lemVaz2}
There is a polynomial algorithm for the bin packing problem which opens at most $(1+\varepsilon)\optsize + 1$ bin, in which $\varepsilon$ is any small but constant value.
\end{lemma}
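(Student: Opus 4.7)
The plan is to follow the classical APTAS construction of Fernández de la Vega and Lueker, which reduces an arbitrary bin packing instance to one with only a constant number of distinct item sizes, so that Lemma~\ref{lemVaz} applies. First, I would split the items into \emph{large} items (size strictly greater than $\varepsilon$) and \emph{small} items (size at most $\varepsilon$), and let $L$ denote the multiset of large items; since every large item has size exceeding $\varepsilon$, the crude bound $|L| < \opt(L)/\varepsilon$ holds, which will be used to absorb the additive loss.

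Next, I would apply \emph{linear grouping} to $L$: sort its items in non-increasing order, partition them into $K = \lceil 1/\varepsilon^{2}\rceil$ consecutive groups of size $\lceil |L|/K\rceil$ (with a smaller last group), pack the first (largest) group one-item-per-bin, and round every item in each remaining group up to the size of the largest item in its group. The rounded large instance $L'$ has at most $K-1$ distinct sizes, so Lemma~\ref{lemVaz} produces an optimal packing of $L'$ in linear time. A standard shifting argument (placing each rounded group on top of the preceding, unrounded, group in any packing of $L$) gives $\opt(L') \le \opt(L)$. Replacing each rounded item by its original, at-most-as-large, counterpart then yields a valid packing of $L$ in at most $\opt(L) + \lceil |L|/K\rceil \le (1+\varepsilon)\opt(L) + 1$ bins.

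Finally, I would insert the small items one by one via First-Fit into the packing constructed so far. Either no new bin is opened, in which case the previous bound already suffices, or every newly opened bin except possibly the last is filled to volume exceeding $1-\varepsilon$ (otherwise First-Fit would have placed the triggering item elsewhere); a volume argument then yields a total of at most $\opt/(1-\varepsilon)+1 \le (1+2\varepsilon)\opt + 1$ bins, provided $\varepsilon \le 1/2$. Running the whole construction with parameter $\varepsilon/2$ in place of $\varepsilon$ gives the claimed $(1+\varepsilon)\opt + 1$ bound. The main obstacle I expect is the joint calibration of $K$ and the large/small threshold so that all three additive errors (the top group, the rounding overhead, and the small-item top-up) collapse into a single $\Theta(\varepsilon)\opt$ term; this is a routine but careful parameter computation. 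Polynomiality is immediate, since $K$ is a constant and every step reduces to Lemma~\ref{lemVaz}, sorting, or First-Fit.
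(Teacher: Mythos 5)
The paper does not prove this lemma at all---it is quoted directly from the cited literature (de la Vega--Lueker / Vazirani's textbook)---and your reconstruction is precisely the standard APTAS argument that those sources give: linear grouping of the large items into $\lceil 1/\varepsilon^2\rceil$ size classes, an exact solve of the rounded constant-size-count instance via Lemma~\ref{lemVaz}, and a First-Fit top-up with the small items plus a volume argument. Your parameter accounting (the $\varepsilon\cdot\opt(L)+1$ bins for the top group, $\opt(L')\le\opt(L)$ from the shifting argument, and the $1/(1-\varepsilon)\le 1+2\varepsilon$ step absorbed by running with $\varepsilon/2$) is correct, so the proposal is a valid proof matching the intended source.
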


We use the above results to otain the following:

\begin{theorem}\label{thUpperPrac}
Consider the online bin packing problem in which there are $m$ distinct items. If $m$ is a constant, there is a (linear time) optimal online algorithm that receives $m \log n + o(\log n)$ bits of advice. If $m$ is not a constant, there is a (polynomial time) online algorithm that reads $m \lceil \log (n+1) \rceil + o(\log n)$ bits of advice and achieves an almost optimal packing with at most $(1+ \varepsilon) \optsize +1$ bins, for any small but constant value of $\varepsilon$.
\end{theorem}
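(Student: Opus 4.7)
The plan is to use the advice tape to communicate, for each of the $m$ (known) distinct sizes, the multiplicity with which it occurs in $\sigma$. Once the algorithm has this histogram, it knows the entire multiset of items and can simulate an offline bin packing routine to compute a packing \emph{template}; the ``online'' part is then trivially matching each arriving item to an unused slot of the correct size in that template. Because all items of a given size are interchangeable, any such matching is valid and produces the same number of bins as the offline solution. So the correctness of the packing reduces entirely to the cost guarantee of the offline routine.

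The encoding cost is easy to budget. Each of the $m$ counts is an integer in $\{0,1,\ldots,n\}$ and can be written in $\lceil \log(n+1)\rceil$ bits; using fixed-width blocks, I just need to prepend a self-delimited encoding of the block width $\lceil \log(n+1)\rceil$, costing $e(\lceil\log(n+1)\rceil) = O(\log\log n)$ bits. This gives a grand total of $m\lceil\log(n+1)\rceil + O(\log\log n) = m\lceil\log(n+1)\rceil + o(\log n)$ bits. Note that $m$ itself and the list of distinct sizes do not need to be encoded, since they are part of the data known a priori to the algorithm. In the constant-$m$ case, the bound collapses to $m\log n + O(1) + o(\log n) = m\log n + o(\log n)$, matching the theorem.

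For the packing quality and running time, I would invoke Lemma~\ref{lemVaz} when $m$ is constant to obtain an optimal template in linear time, and invoke Lemma~\ref{lemVaz2} when $m$ is not constant to obtain a template with at most $(1+\varepsilon)\optsize + 1$ bins in polynomial time. In either case, since the template comes from an offline algorithm run on the exact same multiset as $\sigma$, the number of bins actually used by the online algorithm equals the number used by the template. The online per-request work is $O(1)$ with an appropriate dictionary pointing, for each size, to the next unfilled template slot.

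The only real subtlety is what Lemma~\ref{lemVaz2} covers when item sizes are not bounded away from zero: the standard asymptotic-PTAS technique handles ``small'' items separately by filling them greedily into remaining gaps. I would mention that this is absorbed into the statement of Lemma~\ref{lemVaz2} and that the same template-plus-matching scheme still works, since once the counts are communicated the offline routine can deal with the small items internally. Everything else is bookkeeping: confirming that the self-delimited prefix is read before any item arrives and that the matching of arriving items to template slots never fails, which follows because the histogram of $\sigma$ is exactly the one used to build the template.
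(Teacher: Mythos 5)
Your proposal is correct and follows essentially the same route as the paper: encode the multiplicity of each of the $m$ known sizes in fixed-width blocks of $\lceil\log(n+1)\rceil$ bits, prepend a self-delimited encoding of the block width costing $O(\log\log n)$ bits, and let the online algorithm reconstruct the multiset, run the offline routine of Lemma~\ref{lemVaz} (constant $m$) or Lemma~\ref{lemVaz2} (general $m$), and place each arriving item into a matching slot of the precomputed packing. The only additions beyond the paper's argument are harmless elaborations (the slot-matching bookkeeping and the remark about small items in Lemma~\ref{lemVaz2}).
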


\begin{proof}
The offline oracle simply encodes the input sequence, considered as
a multi-set, in $m \lceil \log (n+1) \rceil$ bits of advice. In order to do that, it writes the number of occurrences of each of the $m$ distinct items on the tape. 
The online algorithm uses the algorithms of Lemma~\ref{lemVaz} (for constant values of $m$) or that of Lemma~\ref{lemVaz2} (for non-constant $m$) to compute an (almost) optimal packing. Then it packs the items in an online manner according to such an (almost) optimal packing.
 The algorithms reads frequencies of items in chunks of $X=\lceil \log (n+1) \rceil$ bits and consequently needs to know the value of $X$.
So, we add self-delimited encodings of $X$ at the beginning of the tape using $e(X)$ bits.
The number of advice bits used by the algorithm is thus $m \lceil \log (n+1) \rceil + \oh{\log\log n}$, which is $m \lceil \log (n+1) \rceil + o(\log n)$ as $m \in o(n)$.
\qed
\end{proof}

We show that the above upper bound is asymptotically tight. We start with the following simple lemma.

\begin{lemma}\label{ntlemma}
Consider the equation $x_1+2x_2+ \ldots + \alpha x_{\alpha} = X$ in which the $x_i$s ($i\leq\alpha)$ and $X$ are non-negative integers. If $X$ is sufficiently large, then this equation has at least $\left( 1 + \frac{2X}{\alpha(\alpha+1)} \right) ^ {\alpha -1}$ solutions.
\end{lemma}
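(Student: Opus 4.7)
The plan is to construct explicitly a large family of solutions by fixing the values of $x_2, \ldots, x_\alpha$ inside a box and letting $x_1$ absorb the remainder. Concretely, for any non-negative integers $x_2, \ldots, x_\alpha$ with $\sum_{i=2}^{\alpha} i\, x_i \leq X$, set $x_1 := X - \sum_{i=2}^{\alpha} i\, x_i$. Then $x_1$ is automatically a non-negative integer and $\sum_{i=1}^{\alpha} i\, x_i = X$, so we obtain a valid solution. Different choices of $(x_2, \ldots, x_\alpha)$ necessarily produce different full tuples, so the number of solutions to the equation is at least the number of $(\alpha-1)$-tuples $(x_2,\ldots,x_\alpha) \in \mathbb{Z}_{\geq 0}^{\alpha-1}$ with $\sum_{i=2}^{\alpha} i\, x_i \leq X$.

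To lower-bound the latter, I would restrict each coordinate to the same range $\{0,1,\ldots,N\}$, choosing $N := \lceil 2X/(\alpha(\alpha+1)) \rceil$. This box contains exactly $(N+1)^{\alpha-1}$ tuples. By the choice of $N$,
\[
N+1 \;\geq\; 1 + \frac{2X}{\alpha(\alpha+1)},
\]
so the box already gives the desired count $\left(1 + \frac{2X}{\alpha(\alpha+1)}\right)^{\alpha-1}$, provided every tuple in the box satisfies the admissibility constraint $\sum_{i=2}^{\alpha} i\, x_i \leq X$.

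The worst case of the constraint is achieved when every $x_i$ equals $N$, giving the value $N\bigl(2+3+\cdots+\alpha\bigr) = N\bigl(\alpha(\alpha+1)/2 - 1\bigr)$. Writing $s := \alpha(\alpha+1)/2$ and using $N \leq 2X/(\alpha(\alpha+1)) + 1 = X/s + 1$, this worst case is bounded by $(X/s + 1)(s-1) = X - X/s + s - 1$, which is at most $X$ precisely when $X \geq s(s-1)$. Since $\alpha$ is fixed in the hypothesis, this inequality holds for all sufficiently large $X$, which is exactly the condition assumed in the lemma.

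The single mildly delicate step is the last verification that the rounded-up value of $N$ still fits inside the constraint; everything else is a direct application of the pigeonhole/box-counting idea. I expect this to be the only place where one actually needs the "$X$ sufficiently large" hypothesis, and it can be replaced by an explicit quantitative threshold of order $\alpha^4$ if desired.
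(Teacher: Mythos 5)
Your proof is correct and follows essentially the same route as the paper's: fix $x_2,\ldots,x_\alpha$ in a box of side roughly $2X/(\alpha(\alpha+1))$ and let $x_1$ absorb the remainder. The only difference is that you treat the rounding honestly — the paper simply assumes $X/\bigl(\alpha(\alpha+1)/2\bigr)$ is an integer ``for simplicity'' — and your explicit threshold $X \geq s(s-1)$ with $s = \alpha(\alpha+1)/2$ is exactly where the ``sufficiently large $X$'' hypothesis is needed.
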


\begin{proof}
Define $A = \sum_{i=1} ^{\alpha} i$. Assign arbitrary values in the range $[0..X/A]$ to all $x_i$s for $2\leq i\leq \alpha$ (for simplicity assume $X/A$ is an integer). There are $(1+X/A)^{\alpha-1}$ different such assignments. Any of these assignments defines a valid solution for the equation since by definition of $A$ we have $\sum_{i=2} ^{\alpha} i x_i \leq X$, and we can assign $x_1 = X -\sum_{i=2} ^{\alpha} i x_i$. Replacing $A$ with $\alpha(\alpha+1)/2$ completes the proof.
\qed
\end{proof}

\begin{theorem}
 At least $(m-3) \log n -2 m \log m$ bits of advice are required to achieve an optimal solution for the online bin packing problem on sequences of length $n$ with $m$ distinct items, each of size at least $\frac{1}{2m}$.
\end{theorem}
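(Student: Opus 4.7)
The plan is a standard hard-family argument: I will exhibit a set $S$ of sequences of length at most $n$ on $m$ distinct sizes (all $\ge 1/(2m)$) such that every $\sigma \in S$ has a forced optimal packing and distinct sequences in $S$ require distinct advice strings. The lower bound then reduces to $\log |S|$, and Lemma~\ref{ntlemma} furnishes the counting.

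For the construction I would use the small size $s = 1/(2m)$ together with the $m-1$ large sizes $\ell_j = 1 - j/(2m)$ for $j = 1, \dots, m-1$; these are $m$ distinct sizes, all in $[1/(2m), 1]$, and $\ell_j + \ell_{j'} > 1$ so each bin of any packing holds at most one large item. Set $X = \lfloor n/2 \rfloor$. For each non-negative integer tuple $(x_1, \dots, x_{m-1})$ with $\sum_j j x_j = X$, define $\sigma_{(x_j)}$ to be the sequence consisting of $X$ copies of $s$ followed by $x_j$ copies of $\ell_j$ for each $j$; its length is $X + \sum_j x_j \le 2X \le n$. Placing $\ell_j$ together with exactly $j$ copies of $s$ fills a bin to capacity $1$, and summing uses every small item, so $\opt(\sigma_{(x_j)}) = L := \sum_j x_j$ and every optimal packing has the same per-bin composition. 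By Lemma~\ref{ntlemma} with $\alpha = m - 1$ we get $|S| \ge (1 + 2X/(m(m-1)))^{m-2}$; taking logarithms (for $n \ge m(m-1)$) yields $\log|S| \ge (m-2)(\log n - 2\log m) \ge (m-3)\log n - 2m\log m$, as desired.

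The crux is the distinguishability argument. Assume for contradiction that $\sigma_{(x_j)}$ and $\sigma_{(x'_j)}$ with $x \ne x'$ share an advice string. After the common prefix of $X$ identical small items the algorithm is in a fixed state, described by the numbers $y_k$ of bins containing exactly $k$ small items; set $T_y(k_0) = \sum_{k \ge k_0} y_k$ and define $T_x, T_{x'}$ analogously. When the $\ell_j$'s arrive, $\ell_j$ fits in an existing bin of $k$ small items iff $k \le j$, and each bin accepts at most one large item, so the algorithm reaches $\opt = L$ only if every existing bin is matched to some large item (otherwise the bin count is at least $Y + (L - M) > L$). By Hall's theorem this is equivalent to $T_y(k_0) \le T_x(k_0)$ for every $k_0 \ge 1$, and analogously $T_y(k_0) \le T_{x'}(k_0)$. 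However, $\sum_{k_0 \ge 1} T_y(k_0) = \sum_k k y_k = X = \sum_k k x_k = \sum_{k_0} T_x(k_0)$, and similarly for $T_{x'}$. Pointwise domination together with equal total sums forces $T_y \equiv T_x \equiv T_{x'}$, hence $x = x'$, a contradiction; therefore the $|S|$ sequences need pairwise distinct advice.

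The main obstacle I anticipate is exactly this distinguishability step. Because the online algorithm is not obligated to mirror the optimal packing of the small items (it has many distributions $y$ that can still be completed to $L$ bins), one cannot simply argue that the prefix placement must equal $(x_j)$. The useful reformulation is the Hall-type condition $T_y \le T_x$ on the tail sums, after which the identity $\sum_{k_0} T = X$ collapses the two Hall inequalities (from $\sigma$ and $\sigma'$) into a coordinatewise equality that forces $x = x'$.
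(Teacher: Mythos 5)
Your proposal follows essentially the same route as the paper: a family of sequences consisting of $\lfloor n/2\rfloor$ identical items of size $1/(2m)$ followed by large complementary items that encode a composition of $n/2$ into parts $1,\dots,m-1$, counted via Lemma~\ref{ntlemma}, with distinct compositions forcing distinct advice strings after the common prefix. Your Hall-type tail-sum argument for the forcing step is correct and more explicit than the paper's appeal to uniqueness of the optimal packing, though it can be shortened: the total item volume equals $\opt(\sigma)$, so any optimal packing has every bin exactly full, which already forces the algorithm's per-bin distribution of small items to coincide with the target composition.
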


\begin{proof}
We define a family of sequences of length $n$ and containing $m$ distinct items and show that the sequences in this family need separate advice strings to be optimally served by an online algorithm. To define the family, we fix $m$ item sizes as being $\SET{ \frac{1}{2m}, \frac{m+2}{2m}, \frac{m+3}{2m}, \ldots ,\frac{2m-1}{2m}, 1 }$. To simplify the argument, we scale up the sizes of bins and items by a factor of $2m$. So, we assume the item sizes are $\SET{1 , m+2, m+3 \ldots , 2m-1, 2m}$, and the bins have capacity $2m$. Each sequence in the family starts with $n/2$ items of size $1$. Consider any packing of these items in which all bins have level at most equal to $m-2$. Such a packing includes $a_1$ bins of level~1 (one item of size 1 in each), $a_2$ bins of level~2 (two items of size 1 in each), etc., such that the $a_i$s are non-negative integers and $a_1 + 2a_2 + \ldots + (m-2) a_{m-2} = n/2$. By Lemma~\ref{ntlemma}, there are at least $\left( 1+ \frac{n}{(m-1)(m-2)}\right)^{m-3}$ distinct packings with the desired property. For any of these packings, we define a sequence in our family. Such a sequence starts with $n/2$ items of size $1$ and is followed by another $n/2$ items. Let $B$ denote the number of bins in a given packing of the first $n/2$ items, so that $B \leq n/2$. The sequence associated with the packing is followed by $B$ items of size larger than $m+1$ which completely \textit{fit} these bins (in non-increasing order of their sizes). Finally, we include another $n/2-B$ items of size $2m$ in the sequence to achieve a sequence of length $n$.

We claim that any of the sequences in the family has a unique optimal packing of size $n/2$. This is because there are exactly $n/2$ \textit{large} items of size strictly greater than $m$ (more than half the capacity of the bin), and the other $n/2$ items have \textit{small} size $1$ (which fit the empty space of all bins). So each bin is fully packed with one large item of size $x$ and $2m-x$ items of size $1$ (see Figure~\ref{packs}).

The unique optimal packing of each sequence is defined by the partial packing of the first $n/2$ small items. Consider a deterministic online algorithm \A receiving the same advice string for two sequence $\sigma_1$ and $\sigma_2$. Since \A is deterministic and both sequences start with the same sub-sequence of small items, the partial packing of the algorithm after packing the first $n/2$ items is the same for both $\sigma_1$ and $\sigma_2$. As a result, the final packing of \A is sub-optimal for at least one them. We conclude that any deterministic online algorithm should receive distinct advice strings for each sequence in the family. Since there are at least $\left(1+\frac{n}{(m-1)(m-2)}\right)^{m-3}$ sequences in the family, at least $(m-3) \log \left(1+\frac{n}{(m-1)(m-2)}\right) > (m-3) \log n -2 m \log m$ bits of advice are required. 
\qed
\end{proof}

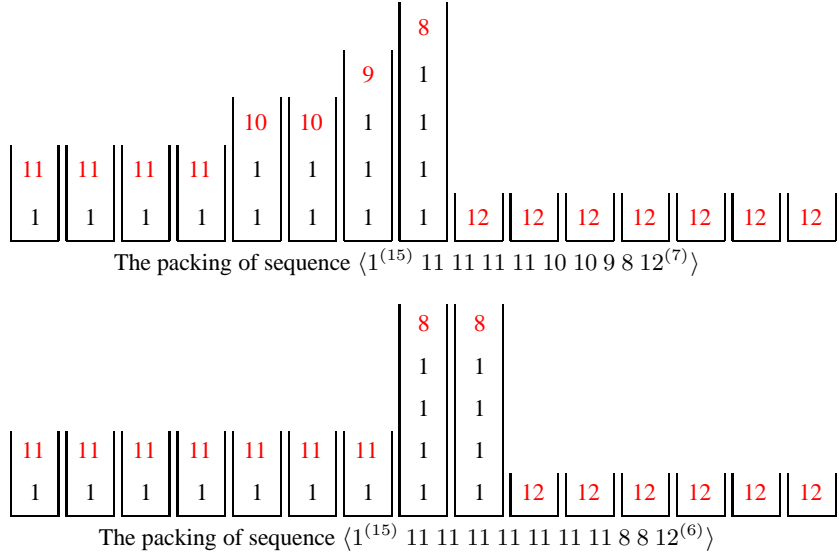
\begin{figure} [!t]
\centering
\begin{picture}(330,90)
\color{black}
\put(15,0){\line(0,36){36}}
\put(33,0){\line(0,36){36}}
\put(15,0){\line(18,0){18}}
\put(22,6){{1}}
\color{red}
\put(19,24){{11}}
\color{black}
\color{black}
\put(36,0){\line(0,36){36}}
\put(54,0){\line(0,36){36}}
\put(36,0){\line(18,0){18}}
\put(43,6){{1}}
\color{red}
\put(40,24){{11}}
\color{black}
\color{black}
\put(57,0){\line(0,36){36}}
\put(75,0){\line(0,36){36}}
\put(57,0){\line(18,0){18}}
\put(64,6){{1}}
\color{red}
\put(61,24){{11}}
\color{black}
\color{black}
\put(78,0){\line(0,36){36}}
\put(96,0){\line(0,36){36}}
\put(78,0){\line(18,0){18}}
\put(85,6){{1}}
\color{red}
\put(82,24){{11}}
\color{black}
\color{black}
\put(99,0){\line(0,54){54}}
\put(117,0){\line(0,54){54}}
\put(99,0){\line(18,0){18}}
\put(106,6){{1}}
\put(106,24){{1}}
\color{red}
\put(103,42){{10}}
\color{black}
\color{black}
\put(120,0){\line(0,54){54}}
\put(138,0){\line(0,54){54}}
\put(120,0){\line(18,0){18}}
\put(127,6){{1}}
\put(127,24){{1}}
\color{red}
\put(124,42){{10}}
\color{black}
\color{black}
\put(141,0){\line(0,72){72}}
\put(159,0){\line(0,72){72}}
\put(141,0){\line(18,0){18}}
\put(148,6){{1}}
\put(148,24){{1}}
\put(148,42){{1}}
\color{red}
\put(148,60){{9}}
\color{black}
\color{black}
\put(162,0){\line(0,90){90}}
\put(180,0){\line(0,90){90}}
\put(162,0){\line(18,0){18}}
\put(169,6){{1}}
\put(169,24){{1}}
\put(169,42){{1}}
\put(169,60){{1}}
\color{red}
\put(169,78){{8}}
\color{black}
\color{black}
\put(183,0){\line(0,18){18}}
\put(201,0){\line(0,18){18}}
\put(183,0){\line(18,0){18}}
\color{red}
\put(187,6){{12}}
\color{black}
\color{black}
\put(204,0){\line(0,18){18}}
\put(222,0){\line(0,18){18}}
\put(204,0){\line(18,0){18}}
\color{red}
\put(208,6){{12}}
\color{black}
\color{black}
\put(225,0){\line(0,18){18}}
\put(243,0){\line(0,18){18}}
\put(225,0){\line(18,0){18}}
\color{red}
\put(229,6){{12}}
\color{black}
\color{black}
\put(246,0){\line(0,18){18}}
\put(264,0){\line(0,18){18}}
\put(246,0){\line(18,0){18}}
\color{red}
\put(250,6){{12}}
\color{black}
\color{black}
\put(267,0){\line(0,18){18}}
\put(285,0){\line(0,18){18}}
\put(267,0){\line(18,0){18}}
\color{red}
\put(271,6){{12}}
\color{black}
\color{black}
\put(288,0){\line(0,18){18}}
\put(306,0){\line(0,18){18}}
\put(288,0){\line(18,0){18}}
\color{red}
\put(292,6){{12}}
\color{black}
\color{black}
\put(309,0){\line(0,18){18}}
\put(327,0){\line(0,18){18}}
\put(309,0){\line(18,0){18}}
\color{red}
\put(313,6){{12}}
\color{black}
\end{picture} \\
\small{The packing of sequence $\langle 1^{(15)} \ 11 \ 11 \ 11 \ 11 \ 10 \ 10\ 9 \ 8 \ 12^{(7)} \rangle$}
\begin{picture}(330,90)
\color{black}
\put(15,0){\line(0,32){32}}
\put(33,0){\line(0,32){32}}
\put(15,0){\line(18,0){18}}
\put(22,6){{1}}
\color{red}
\put(19,22){{11}}
\color{black}
\color{black}
\put(36,0){\line(0,32){32}}
\put(54,0){\line(0,32){32}}
\put(36,0){\line(18,0){18}}
\put(43,6){{1}}
\color{red}
\put(40,22){{11}}
\color{black}
\color{black}
\put(57,0){\line(0,32){32}}
\put(75,0){\line(0,32){32}}
\put(57,0){\line(18,0){18}}
\put(64,6){{1}}
\color{red}
\put(61,22){{11}}
\color{black}
\color{black}
\put(78,0){\line(0,32){32}}
\put(96,0){\line(0,32){32}}
\put(78,0){\line(18,0){18}}
\put(85,6){{1}}
\color{red}
\put(82,22){{11}}
\color{black}
\color{black}
\put(99,0){\line(0,32){32}}
\put(117,0){\line(0,32){32}}
\put(99,0){\line(18,0){18}}
\put(106,6){{1}}
\color{red}
\put(103,22){{11}}
\color{black}
\color{black}
\put(120,0){\line(0,32){32}}
\put(138,0){\line(0,32){32}}
\put(120,0){\line(18,0){18}}
\put(127,6){{1}}
\color{red}
\put(124,22){{11}}
\color{black}
\color{black}
\put(141,0){\line(0,32){32}}
\put(159,0){\line(0,32){32}}
\put(141,0){\line(18,0){18}}
\put(148,6){{1}}
\color{red}
\put(145,22){{11}}
\color{black}
\color{black}
\put(162,0){\line(0,80){80}}
\put(180,0){\line(0,80){80}}
\put(162,0){\line(18,0){18}}
\put(169,6){{1}}
\put(169,22){{1}}
\put(169,38){{1}}
\put(169,54){{1}}
\color{red}
\put(169,70){{8}}
\color{black}
\color{black}
\put(183,0){\line(0,80){80}}
\put(201,0){\line(0,80){80}}
\put(183,0){\line(18,0){18}}
\put(190,6){{1}}
\put(190,22){{1}}
\put(190,38){{1}}
\put(190,54){{1}}
\color{red}
\put(190,70){{8}}
\color{black}
\color{black}
\put(204,0){\line(0,16){16}}
\put(222,0){\line(0,16){16}}
\put(204,0){\line(18,0){18}}
\color{red}
\put(208,6){{12}}
\color{black}
\color{black}
\put(225,0){\line(0,16){16}}
\put(243,0){\line(0,16){16}}
\put(225,0){\line(18,0){18}}
\color{red}
\put(229,6){{12}}
\color{black}
\color{black}
\put(246,0){\line(0,16){16}}
\put(264,0){\line(0,16){16}}
\put(246,0){\line(18,0){18}}
\color{red}
\put(250,6){{12}}
\color{black}
\color{black}
\put(267,0){\line(0,16){16}}
\put(285,0){\line(0,16){16}}
\put(267,0){\line(18,0){18}}
\color{red}
\put(271,6){{12}}
\color{black}
\color{black}
\put(288,0){\line(0,16){16}}
\put(306,0){\line(0,16){16}}
\put(288,0){\line(18,0){18}}
\color{red}
\put(292,6){{12}}
\color{black}
\color{black}
\put(309,0){\line(0,16){16}}
\put(327,0){\line(0,16){16}}
\put(309,0){\line(18,0){18}}
\color{red}
\put(313,6){{12}}
\color{black}
\end{picture} \\
\small{The packing of sequence $\langle 1^{(15)} \ 11 \ 11 \  11 \ 11 \ 11 \ 11 \ 11 \ 8 \ 8 \ 12^{(6)} \rangle$} 
\caption{The optimal packings for two sequences of the family when $n=30$ and $m=6$ (item sizes and bin capacities are scaled by $2m=12$).}
\label{packs}%
\end{figure}

\section{An Algorithm with Sublinear Advice} \label{secsublinear}
In what follows we introduce an algorithm that receives $\log n + o(\log n)$ bits of advice and achieves a competitive ratio of $\frac{3}{2}$, for any instance of the online bin packing problem.
An offline oracle can compute and write the advice on the tape in linear time, and
the online algorithm runs as fast as First-Fit. Thus, the algorithm might be applied in practical scenarios in which it is allowed to have a ``quick look'' at the input sequence. 

We call items \textit{tiny}, \textit{small}, \textit{medium}, and \textit{large} if their sizes lie in the intervals $(0,1/3]$, $(1/3,1/2]$, $(1/2,2/3]$, and $(2/3,1]$, respectively. The advice that the algorithm receives is the number of medium items, which we denote by $\alpha$.

The algorithm reads the advice tape, obtains $\alpha$, opens $\alpha$ bins, called \textit{critical bins}, and reserves $2/3$ of the space in each of them. This reserved space will be used to pack a medium item in each of the critical bins, and these bins have a \textit{virtual level} of size 2/3 at the beginning.
All other bins have virtual level zero when they are opened.
 The algorithm serves an item $x$ in the following manner:

\begin{itemize}
\item If $x$ is a large item, open a new bin for it. Set the virtual level to its size.
\item If $x$ is a medium item, put it in the reserved space of a critical bin $B$. Update the virtual level to the actual level. ($B$ will not have any reserved space now.)
\item If $x$ is small or tiny, use the First Fit (\FF) strategy to put it into any of the open bins, based on virtual levels (open a new bin if required). Add the size of the item to the virtual level.
\end{itemize}

Note that the critical bins appear first in the ordering maintained by the algorithm as they are opened before other bins. 

\begin{theorem}\label{sublinAdvice}
There is an online algorithm which receives $\log n + o(\log n)$ bits of advice and has cost $3/2 \opt(\sigma) + 3$ for serving any sequence $\sigma$ of size $n$.
\end{theorem}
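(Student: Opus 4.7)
The advice complexity is immediate: $\alpha \in \{0, \ldots, n\}$ fits in $\lceil \log (n+1) \rceil$ bits, and prefixing a self-delimited encoding of length $O(\log\log n)$ to specify the length of the binary representation yields total advice of $\log n + o(\log n)$ bits.

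For the competitive ratio, let $L$, $M = \alpha$, and $S$ denote the numbers of large, medium, and small items in $\sigma$, and let $T$ be the total size. By construction, $\alg$ uses $L$ bins for large items, $M$ critical bins for medium items, and some number $k$ of ``other'' bins opened by First Fit for small and tiny items, so $\alg(\sigma) = L + M + k$. I will use two lower bounds: $\opt(\sigma) \geq L + M$, because large and medium items each exceed $1/2$ and cannot share a bin with another such item, and $\opt(\sigma) \geq T$, by volume.

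The first technical step records the First-Fit property on other bins: when an other bin $O_{i+1}$ is opened by a small or tiny item $y$ (so $y \leq 1/2$), $y$ did not fit in $O_i$, so $O_i$'s virtual and hence actual level exceeds $1 - y \geq 1/2$. Consequently every other bin except possibly the one opened last ends with actual level above $1/2$. The main step will then be to prove the bound $k \leq \tfrac{1}{2}(L + M) + P + O(1)$, where $P := \opt(\sigma) - L - M \geq 0$ is the number of OPT bins containing no large or medium item. This inequality is shown by splitting into two complementary cases: if the other bins carrying two smalls or dense tinies (each with actual level above $2/3$) dominate, then the $\opt(\sigma) \geq T$ bound absorbs their contribution; otherwise, there are many ``wasteful'' other bins each containing just one small plus some tinies (because the small arrived before its matching medium), and since each such small could have been paired with a medium in an optimal packing, the excess over $P$ is at most $M/2 \leq (L+M)/2$. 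Substituting into $\alg(\sigma) = L + M + k$ and writing $\opt(\sigma) = (L + M) + P$ yields $\alg(\sigma) \leq \tfrac{3}{2}(L+M) + P + O(1) \leq \tfrac{3}{2}\,\opt(\sigma) + O(1)$, with the constant refined to $3$ after the bookkeeping below.

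The main obstacle I anticipate is correctly handling the tiny items, which may land in large bins, critical bins, tiny-only other bins, or be mixed with smalls depending on arrival order, together with the subtlety that a small item can fit into a critical bin only after the corresponding medium has been placed. Both phenomena only help $\alg$ (they can only reduce $k$), but proving the bound rigorously requires treating each configuration, tracking how much volume is absorbed into large and critical bins versus other bins, and carefully attributing last-bin exceptions; this case analysis is precisely what produces the additive constant $3$ in the final statement.
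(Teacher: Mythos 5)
Your central claim, $k \leq \tfrac{1}{2}(L+M) + P + O(1)$ with $P = \opt(\sigma) - L - M$, is false, so the proposed route has a genuine gap. Take $L = M = 0$: then there are no critical bins, $\alg$ degenerates to pure \FF on items of size at most $1/2$, $P = \opt(\sigma)$, and your inequality asserts that \FF is optimal up to an additive constant on such inputs. It is not. For instance, send $n$ items of size $\tfrac{1}{7}+\varepsilon$, then $n$ of size $\tfrac{1}{3}+\varepsilon$, then $n$ of size $\tfrac{1}{2}-2\varepsilon$; \FF packs these six, two, and two per bin respectively, using $\tfrac{7n}{6}$ bins, while \opt places one of each per bin for $n$ bins. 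The theorem is of course not violated ($\tfrac{7}{6} < \tfrac{3}{2}$), but your intermediate lemma is. The root cause is that your case (a) relies on the volume bound $\opt(\sigma) \geq T$ applied to bins that are $2/3$ full, and that bound can never yield a coefficient better than $\tfrac{3}{2}$ on $P$; the inequality that is actually attainable (and sufficient) is $k \leq \tfrac{1}{2}(L+M) + \tfrac{3}{2}P + O(1)$. Your case (b) is also unsubstantiated: a small item sitting alone in an under-full \FF bin need not be pairable with a medium item in \opt (e.g., medium $0.65$ and small $0.45$ exceed $1$ together), and no argument is given for why such bins are at most $M/2$ in number rather than $M$.

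Separately, you have not engaged with the actual crux of the analysis, which is the critical bins: a critical bin can end at level barely above $1/2$ (just its medium item), and one must show this happens at most a bounded number of times or fall back to a different accounting. The paper splits on whether some critical bin receives no non-medium item. If one does, every tiny item fits there, so the non-critical bins contain only large items and paired smalls, and a direct count against $\opt(\sigma) \geq L + M + \max(0,(S-M)/2)$ closes the case. If none does, the paper shows all but three bins reach level $2/3$; the key step there is that a critical bin's virtual level overestimates its actual level by less than $2/3 - 1/2 = 1/6$, so if some critical bin rejects a tiny item of size at most $1/6$, its actual level already exceeds $2/3$. You flag "handling the tiny items" as an anticipated obstacle and defer it to unspecified bookkeeping, but this is precisely where the proof lives; combined with the false main inequality, the proposal does not constitute a proof.
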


\begin{proof}
We prove that the algorithm described above has the desired property. The value of $\alpha$ is encoded in $X = \lceil\log (n+1) \rceil$ bits of advice. In order to read this properly from the tape, the algorithm needs to know the value of $X$. This can be done by adding the self-delimited encoding of $X$ in $e(X) =
\lceil \log X \rceil + 2\lceil \log \log (X) \rceil+2$
bits at the beginning of the tape. Consequently the number of advice bits used by the algorithm is $X + \oh{\log X}$, which is $\log n + o(\log n)$ as stated by the theorem.

Consider the final packing of the algorithm for serving a sequence $\sigma$.
There are two cases. In the first case, there is a critical bin $B$ so that no other item, except a medium item, is packed in it. Since all tiny items are smaller than $1/3$ and can fit in $B$, all the non-critical bins that are opened after $B$ include small and large items only. More precisely, they include either a single large item or two small items (except the last one which might have a single small item). Let $L$, $M$, and $S$ denote the number of large, medium, and small items. The cost of the algorithm is at most $L+M+S/2+1$. Now, if $S \leq M$, this would be at most $L +3/2 M+1$. Since $L+M$ is a lower bound on the cost of \opt,
the cost of the algorithm is at most $3/2 \opt(\sigma)+1$ and we are done. If $S>M$, \opt should open $L+M$ bins for large and medium items, and in the best case, it packs $M$ small items together with medium ones. For the other $S-M$ bins, \opt has to open at least $(S-M)/2$ bins. Hence the cost of \opt is at least $L+M + (S-M)/2 = L + M/2 + S/2$, and we have $3/2 \opt(\sigma) \geq 3L/2 + 3M/4 + 3S/4 > L + M + S/2$. Thus, the cost of the algorithm is at most $3/2 \opt(\sigma)+1$.

In the second case, we assume that all critical bins include another item in addition to the medium item. We claim that at the end of serving a sequence all bins, except possibly two, have level at least $2/3$. First, we verify this for non-critical bins (bins without medium items). If a non-critical bin is opened by a large item, it clearly has level higher than $2/3$. All other non-critical bins only include items of size at most $1/2$. Hence, these bins, except possibly the last one, include at least two items. Among the non-critical bins that include two items, consider two bins $b_i$ and $b_j$ ($i<j$) that have levels smaller than $2/3$. Since $b_j$ contains at least two items, at least one of them has size smaller than $1/3$. This item could fit in $b_i$ by the \FF property. We conclude that all non-critical bins, except possibly two, have level at least $2/3$.
Now, suppose two critical bins $b_i$ and $b_j$ have levels smaller than $2/3$. Consider the first non-medium item $x$ which is packed in $b_j$ (in the second case, such an item exists). Since a medium item is packed in the bin, $x$ should be either tiny or small. If $x$ is small, then the level of $b_j$ is at least $1/2 + 1/3$, which contradicts the level of $b_j$ being smaller than $2/3$. Similarly, $x$ cannot be a tiny item of size larger than $1/6$ (since $1/2+1/6 \geq 2/3$). Hence, $x$ is a tiny item of size at most $1/6$. This implies that at the time the online algorithm packs $x$, bin $b_i$ has a virtual level of at least $5/6$. The virtual level is at most $1/6$ larger than the actual level (the final level). 
Hence, the actual level of $b_i$ is at least $5/6-1/6 = 2/3$. We conclude that at most one critical bin has level smaller than $2/3$. To summarize, at most three bins have level smaller than $2/3$. Hence, the cost of the algorithm is at most $3/2 \optsize + 3$. 
\qed
\end{proof}

\section{An Algorithm with Linear Advice}\label{linearAlg}

In this section, we present an algorithm that receives $2n + o(n)$ bits of advice and achieves a competitive ratio of $4/3+\varepsilon$ for any sequence of size $n$, and arbitrarily small (but constant) values of $\varepsilon$.
Consider an algorithm that receives an \textit{approximate size} for each sufficiently large item $x$ encoded using $k$ bits. The approximate size of $x$ would be larger than its \textit{actual size} by at most an additive term of $1/{2^k}$. The algorithm can optimally pack items by their approximate sizes and achieve an \textit{approximate packing} which includes a reserved space of size $x+\varepsilon$ $(\varepsilon \leq 1/2^k)$ for each item. Precisely, for each sufficiently large item $x$, the approximate packing includes a reserved space of size $x+\varepsilon$ $(\varepsilon \leq 1/2^k)$ for $x$. This enables the algorithm to place $x$ in the reserved space for it in the approximate packing. Smaller items are treated differently and the algorithm does not reserve any space for them. In the reminder of this section, we elaborate this idea to achieve a 4/3-competitive algorithm.

Notice that the cost of an approximate packing can be as large as $\frac{3}{2}$ times the cost of \opt. To see that, consider a sequence which is a permutation of $\langle \frac{1}{2}+ \varepsilon_1, \frac{1}{2} - \varepsilon_1, \frac{1}{2} + \varepsilon_2, \frac{1}{2} - \varepsilon_2 , \ldots, \frac{1}{2} + \varepsilon_{n/2}, \frac{1}{2}- \varepsilon_{n/2}\rangle $, where $\varepsilon_i < 1/2^n (1\leq i \leq n/2)$. Since \opt packs all bins tightly, an increase in the sizes of items by a constant (small) $\varepsilon$ results in opening a new bin for each two bins \opt uses. Hence the cost of the optimal approximate packing can be as bad as $\frac{3}{2}$ \opt.
This example suggests that using approximate packings is not good for the bins in which a small number of large items are tightly packed. To address this issue we divide the bins of \opt into two groups:

\begin{definition}\label{def1}
Consider an optimal packing of a sequence $\sigma$. Given a small parameter $\varepsilon' < 1/60$, define {\em good bins} to be those where the total size of the items smaller than $1/4$ in the bin is at least $5 \varepsilon'$.
Define all other bins to be {\em bad bins}.
\end{definition}
A part of the advice received for each item $x$ indicates if $x$ is packed by \opt in a good bin or in a bad bin. This enables us to treat items packed in these two groups separately.

\begin{lemma}\label{lemlin1}
Consider sequences for which all bins in the optimal packing are good (as defined above). There is an online algorithm that receives $o(n)$ bits of advice and achieves a competitive ratio of $4/3$.
\end{lemma}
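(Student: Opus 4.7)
My plan is to design a rounding-based algorithm that uses $O(\log n)$ bits of advice. Fix the constant $\varepsilon'$ from Definition~\ref{def1} (so $\varepsilon' < 1/60$) and call an item \emph{large} if its size is at least $1/4$, and \emph{small} otherwise. I would round up each large item's size to the nearest integer multiple of $\varepsilon'$; this produces at most $\lceil 3/(4\varepsilon')\rceil+1 = O(1)$ distinct ``rounded sizes''. The offline oracle writes on the advice tape the frequency of each rounded-size class, preceded by a short self-delimited header giving the common bit-length. With $O(1)$ classes and each frequency at most $n$, the total advice is $O(\log n) = o(n)$.

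After reading the advice, the online algorithm invokes Lemma~\ref{lemVaz} on the multiset of rounded large items---which has a constant number of distinct sizes---to compute, deterministically, an optimal packing $\hat{P}$ of the rounded instance. This packing fixes, in each bin $j$, reserved slots totalling $r_j \leq 1$, one per rounded large item expected there. When a large item of rounded size $s$ arrives, the algorithm places it in any still-empty slot of nominal size $s$; since the advice reports the exact frequencies, such a slot is always available. Small items are placed by First-Fit into the \emph{unreserved} portion $1-r_j-s_j$ of the bins, and a fresh bin is opened only when no existing bin has enough unreserved space.

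The analysis has two steps. First, I would show $|\hat{P}| \leq \optsize$. Each good bin of \opt has small mass $\geq 5\varepsilon'$, so its large mass is at most $1-5\varepsilon'$; with $\varepsilon' < 1/20$ this forces at most three large items per good bin (four items of size $\geq 1/4$ would already exceed $1-5\varepsilon'$). Rounding up then increases the large mass of each good bin by at most $3\varepsilon' < 5\varepsilon' \leq 1-L_i$, so the rounded items still fit in \opt's assignment, giving $|\hat{P}| \leq \optsize$. Second, let $B$ be the number of fresh bins opened. If $B=0$ we are done. Otherwise, at the moment the last fresh bin is opened by a small item of size $\leq 1/4$, every existing bin has unreserved free space $<1/4$, so $r_j + s_j > 3/4$ for $j \in \hat{P}$ and $s_j > 3/4$ for every earlier fresh bin. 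Summing these ``claimed levels'' and using $\sum r_j \leq L + 3\varepsilon'\optsize$ together with $L+S \leq \optsize$ gives
\[
\tfrac{3}{4}\bigl(|\hat{P}| + B - 1\bigr) \;<\; (1+3\varepsilon')\,\optsize,
\]
from which $|\hat{P}| + B \leq \tfrac{4}{3}(1+3\varepsilon')\,\optsize + 1$.

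The step I expect to be the main obstacle is tightening this inequality to a clean $\tfrac{4}{3}\optsize + O(1)$ bound rather than $(\tfrac{4}{3}+O(\varepsilon'))\,\optsize + O(1)$. The natural refinement is to let each placed large item ``release'' its rounding excess $s-x$ back to the pool of free space available for subsequent small items, and then to verify (i) every later large item still finds an empty slot despite the released space being consumed by intervening small items, and (ii) the First-Fit trigger for small items still forces the final \emph{actual} level of each pre-existing bin to stay above $3/4$. All the remaining pieces---the advice-size bookkeeping, the applicability of Lemma~\ref{lemVaz}, and the basic mass inequality---are essentially routine.
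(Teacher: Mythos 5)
Your overall architecture --- round the large items to $O(1)$ size classes, send the class frequencies as $O(\log n)$ bits of advice, compute an optimal packing of the rounded large items with reserved slots, and First-Fit the small items into the leftover space --- is essentially the paper's. But the obstacle you flag at the end is a genuine gap, not a routine tightening, and it is caused by your choice of $1/4$ as the small/large threshold. When a small item of size just under $1/4$ triggers a fresh bin, First-Fit only certifies that each existing bin has \emph{committed} level (reserved slot sizes plus small items) above $3/4$; since the committed level overstates the actual level by the rounding overhead of up to $3\varepsilon'$ per bin (or $4\varepsilon'$, since a bin of $\hat P$ may hold four rounded items of size $1/4$), the final actual levels are only guaranteed to exceed $3/4 - O(\varepsilon')$, and the ratio $4/3 + O(\varepsilon')$ is the best your accounting can give. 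The ``release the rounding excess'' refinement does not close this: slots that are still unfilled at the moment the last fresh bin opens are later filled by items smaller than the slots, so the final level can still sit below $3/4$. The lemma as stated claims a clean ratio of $4/3$.

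The paper's proof avoids this by setting the threshold at $1/6$ instead of $1/4$. Then each bin holds at most five large items, so the rounding overhead is at most $5\varepsilon' \le 1/12$, while the First-Fit trigger (an item of size $\le 1/6$ not fitting) certifies a virtual level above $5/6$; hence the actual level is at least $5/6 - 5\varepsilon' \ge 3/4$ exactly when $\varepsilon' \le 1/60$, which is why Definition~\ref{def1} imposes that bound. (In the complementary case where no fresh bin is opened, the paper uses the $5\varepsilon'$ of small mass guaranteed in every good bin to make room for the rounding, transferring at most $1/4$ worth of items out of each bin and repacking the transferred items into one new bin per four old ones, for a cost of at most $\frac{5}{4}\optsize$; your $|\hat P| \le \optsize$ bound for that case is fine and even slightly cleaner.) So the missing idea is precisely the interplay between the small-item threshold, the per-bin count of large items, and the admissible rounding error; with your parameters these do not fit together to yield $4/3$.
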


\begin{proof}
Call an item \textit{small} if it is smaller than or equal to $1/6$ and \textit{large} otherwise. The advice bits define the approximate sizes of all large items with a precision of $\varepsilon'$. The amount of advice will be roughly $2^{1/ \varepsilon'} \log n$ which is $o(n)$ for constant values of $\varepsilon'$.
The online algorithm \alg can build the optimal approximate packing of large items. In such a packing, there is a reserved space of size at most $x+\varepsilon'$ for any large item of size $x$. The algorithm considers this packing as a partial packing and initializes the level of each bin to be the total sizes of approximated items in that bin. For packing an item $x$, if $x$ is large, \alg packs it in the space reserved for it in the approximate packing. It also updates the level of the bin to reflect the actual size of $x$. If $x$ is small, \alg simply applies the First-Fit strategy to pack $x$ in a bin of the partial packing (and opens a new bin for it if necessary). We prove that \alg is $4/3$-competitive. In the final packing by \alg, call a bin ``red'' if all items packed in it are small items and call it ``blue'' otherwise (the blue bins constitute the approximated packing at the beginning). There are two cases
to consider.

In the first case, there is no red bin in the final packing of \alg, i.e., all small items fit in the remaining space of the bins in the approximate packing of large items. Let $\sigma'$ be a copy of the input sequence in which the sizes of large items are approximated, i.e., increased by at most $\varepsilon'$; also let $X$ be the number of bins for the optimal packing of $\sigma'$. Since there is no red bin in the final packing of \alg, the cost of \alg is equal to $X$. Consider the optimal packing of the actual input sequence $\sigma$. Since all bins are good, one can transfer a subset of items to provide an available space of size at least $5\varepsilon'$ in each bin. After such a transfer, we can increase the sizes of large items to their approximate sizes. Since there are at most 5 large items in each bin and also available space of size at least $5\varepsilon'$, the packing constructed this way is a valid packing for the sequence $\sigma'$. Since the size of the transferred items for each bin is at most $1/4$, the transferred items from each group of four bins can fit in one new bin. Consequently the number of bins in the new packing is at most $5/4 \optsize$. We know that the final packing by \alg is the optimal packing for $\sigma'$ (with cost $X$), and in particular not worse than the packing constructed above. Hence, the cost of \alg is not more than $5/4 \optsize$.

In the second case, there is at least one red bin in the final packing of \alg. We claim that all bins in the final packing of \alg, except possibly the last, have levels larger than $3/4$. The claim obviously holds for the red bins since the levels of all these bins (excluding the last one) are larger than $5/6$. Moreover, since there is a bin which is opened by a small item, all blue bins have levels larger than $5/6$, i.e., the total size of packed items and reserved space for the large items is larger than $5/6$. Since there are at most 5 large items in each bin, the actual level of each bin in the final packing of \alg is at least $5/6 - 5\varepsilon'$, which is not smaller than $3/4$ for $\varepsilon' \leq 1/60$. So, all bins, except possibly one, have levels larger than $3/4$. Consequently, the algorithm is $4/3$-competitive. 
\qed
\end{proof}

It remains to address how to deal with bad bins. The next three lemmas do this.
\begin{lemma}\label{str2}
Consider sequences for which all bins in the optimal packing include precisely two items. There is an algorithm that receives $1$ bit of advice per request and achieves an optimal packing.
\end{lemma}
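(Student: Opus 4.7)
The plan is to attach one bit of advice to each item: bit $0$ if the item is the first of its \opt-pair to arrive in $\sigma$, and bit $1$ if it is the second. On a bit-$0$ item, the algorithm opens a fresh bin and places the item there; on a bit-$1$ item, the algorithm runs Best-Fit, placing the item into the currently open bin of smallest residual capacity that can still accommodate it. Since every \opt-bin contains exactly two items, exactly $n/2$ items carry bit $0$, so if bit-$1$ items never trigger the opening of a new bin, the algorithm uses precisely $n/2 = \optsize$ bins, matching \opt. Thus the entire argument reduces to the single statement: whenever a bit-$1$ item $x$ arrives, some already open bin has residual capacity at least $x$.

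To prove this I would maintain an inductive invariant through the execution. For each bit-$0$ item $y$ whose partner has not yet arrived --- call such a $y$ an \emph{open} item --- there is a designated current bin $\phi(y)$ whose residual capacity is at least the size of $y$'s \opt-partner, and the map $\phi$ is injective. The invariant is easy to initiate whenever a bit-$0$ item $y$ arrives: the newly opened bin $B_y$ has residual $1-y$, which is at least the size of $y$'s partner, so we take $\phi(y) := B_y$. When a bit-$1$ item $x$ with partner $y^*$ arrives, the invariant directly says that $\phi(y^*)$ witnesses that some bin fits $x$; Best-Fit therefore succeeds in placing $x$ into some bin $B'$, and no new bin is opened.

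The one delicate step, which also explains why the tie-breaking rule of Best-Fit matters, is restoring the invariant after the placement. Remove $y^*$ from the domain of $\phi$ since its pair is now complete, and let $B_0$ denote the old $\phi(y^*)$. If $B' = B_0$, nothing more is required. Otherwise Best-Fit's selection rule tells us that $B'$'s residual is no larger than $B_0$'s, because both bins accommodate $x$ and Best-Fit picks the tightest. If $B'$ happened to equal $\phi(y')$ for some other open item $y'$, then $B'$'s residual was already at least the size of $y'$'s partner, and hence so is $B_0$'s; we then reassign $\phi(y') := B_0$, which preserves the invariant. If $B'$ was not the $\phi$-image of any open item, no reassignment is needed and $B_0$ simply becomes unclaimed. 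The invariant is thus maintained throughout the execution, every bit-$1$ item finds an already-open bin, and the algorithm produces an optimal packing with exactly $n/2$ bins.
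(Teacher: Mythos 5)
Your proposal is correct and follows essentially the same route as the paper: the same one-bit advice (first versus second element of an \opt-pair), the same algorithm (open a bin on a first item, Best-Fit on a second item), and the same exchange-style invariant showing that Best-Fit's deviation from the intended pairing can always be repaired by swapping the designated targets. Your formulation in terms of an injective map from pending items to bins with a residual-capacity guarantee is a slightly more explicit rendering of the paper's partner-remapping argument (and cleanly covers the case where Best-Fit drops an item into a bin that is nobody's designated target), but the underlying idea is identical.
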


\begin{proof}
The single bit of advice for an item $x$ determines whether or not the \textit{p
artner} of $x$ appeared as a previous request, where the partner of $x$ is the i
tem which is packed in the same bin as $x$ in \opt's packing. Consider an algori
thm \alg that works as follows: If the partner of $x$ has not been requested yet
, \alg opens a new bin for $x$. Otherwise, it uses the BF strategy to pack $x$ i
n one of the open bins.%
We claim that \alg achieves an optimal packing.

Assume that initially we have a mapping that maps the last item to go into a bin
 to the item it goes on top of in the optimal packing, i.e., it maps the second 
item of each bin to the first item.
We update this mapping when necessary and maintain the invariant that
we can always pack optimally according to the mapping. For serving a request $x$
, if BF does
not pack according to this mapping, it packs $x$ on top of $y'$, while,
according to the mapping, it was supposed to pack $x$ on top of $y$,
and a later $x'$ is supposed to go on top of $y'$. Due to the BF strategy, $y' \
geq y$, so
we can update the mapping to map the currently unprocessed $x'$ to $y$,
and, of course, $x$ to $y'$.
\qed
\end{proof}

\begin{lemma} \label{str3}
Consider a sequence $\sigma$ for which all items have sizes larger than $1/4$ and for which each bin in \opt's packing includes precisely three items. The cost of the Harmonic algorithm is at most $4/3 \optsize + 3$ for serving such a sequence.
\end{lemma}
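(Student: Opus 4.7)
The plan is to leverage the rigid structural information about \opt's packing to compute Harmonic's cost almost directly, with essentially no ``hard'' step.

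First I would extract tight size bounds. If every \opt bin contains exactly three items, each larger than $1/4$, then each item must also be strictly smaller than $1/2$: otherwise, such an item together with its two bin-companions (each $>1/4$) would overflow. Hence every item lies in $(1/4, 1/2)$, so under Harmonic$_K$ (for any $K\geq 3$) each item is classified either as class~2 (size in $(1/3,1/2]$) or class~3 (size in $(1/4,1/3]$). The Harmonic algorithm applies \NF within each class, packing exactly two class-2 items per bin and exactly three class-3 items per bin (four of the latter would sum to more than $1$, so a fourth never fits).

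Next I would classify \opt's bins by composition. Since three class-2 items would sum to more than $1$, each \opt bin falls into one of exactly three types: three class-3 items (say $a$ such bins), two class-3 plus one class-2 ($b$ such bins), or one class-3 plus two class-2 ($c$ such bins). Writing $n_2,n_3$ for the total number of items of each class in $\sigma$, we then have
\[ a + b + c = \optsize, \qquad n_2 = b + 2c, \qquad n_3 = 3a + 2b + c. \]

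Finally I would bound Harmonic's cost directly:
\[ \left\lceil \frac{n_2}{2} \right\rceil + \left\lceil \frac{n_3}{3} \right\rceil \;\leq\; \frac{n_2}{2} + \frac{n_3}{3} + 2 \;=\; a + \frac{7b}{6} + \frac{4c}{3} + 2 \;=\; \optsize + \frac{b}{6} + \frac{c}{3} + 2. \]
Since $b,c \geq 0$ and $b + c \leq \optsize$, the expression $b/6 + c/3$ is maximized at $c = \optsize$, $b = 0$, yielding the bound $\tfrac{4}{3}\optsize + 2 \leq \tfrac{4}{3}\optsize + 3$, as claimed.

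I do not foresee a serious obstacle: the entire argument rests on the elementary size-range deduction and the enumeration of the three admissible bin compositions. The single insight worth highlighting is that the hypothesis (three items per \opt bin, all above $1/4$) simultaneously (i) confines every item to two consecutive Harmonic classes, and (ii) rules out the one composition (three class-2 items) that would make $n_2$ large relative to $\optsize$; together these force the worst case to be the ``$c = \optsize$'' configuration that produces exactly the $4/3$ ratio.
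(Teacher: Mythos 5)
Your proof is correct and rests on the same structural facts as the paper's: every item falls in $(1/4,1/2)$ and splits at $1/3$ into two Harmonic classes, Harmonic packs two of the larger class or three of the smaller class per bin (losing at most two bins to rounding), and no \opt bin can hold three items from the larger class. The paper packages this as a weighting-function argument (weight $1/2$ for items in $(1/3,1/2)$ and $1/3$ for items in $(1/4,1/3]$, so each \opt bin has weight at most $4/3$ while each full Harmonic bin has weight at least $1$), which is precisely the dual of your explicit enumeration of the three admissible bin compositions and the maximization of $b/6+c/3$; the two write-ups are interchangeable.
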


\begin{proof}
The proof is based on a simple weighting function. Call an item $x$ {\em large} if $1/3 < x <1/2$ and {\em small} otherwise ($1/4 < x \leq 1/3$). Define the weight of $x$ to be $1/2$ if $x$ is large and $1/3$ if it is small. Consider a bin $B$ in the packing of $\sigma$ by \opt. Since there are three items in $B$, its weight is maximized when there are two large items and one small item in it (three large item do not fit in the same bin). Hence, the weight of each bin in the \opt packing is at most $2 \times 1/2 + 1/3 = 4/3$. Consequently, we have $\optsize \geq 3/4 W$, where $W$ is the total weights of all items.

The Harmonic algorithm (\HA) simply packs small and large items in separate collections of bins. So, each of the algorithm's bins, except possibly two bins, contains either three small items or two large items. In both cases, the weight of each bin is at least $1$ and we have $\HA(\sigma) \leq W +2$. As a conclusion $\HA(\sigma) \leq 4/3 \opt(\sigma) + 2$ which completes the proof. 
\qed
\end{proof}

\begin{lemma}\label{lem4}
Consider a sequence $\sigma$ for which all bins in the optimal packing are bad bins (as defined earlier). There is an algorithm that receives two bits of advice for each request, and
opens at most $(4/3 + \frac{5 \varepsilon'}{1- 5\varepsilon'})\optsize+3$ bins.
\end{lemma}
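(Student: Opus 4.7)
My plan is to parse each item's two advice bits into one of four roles and route it to the appropriate sub-routine, combining the exact-fit rule of Lemma \ref{str2} with the Harmonic analysis of Lemma \ref{str3}, then handling the tiny items (those of size below $1/4$) with First-Fit. By Definition \ref{def1}, each bad bin of \opt contains at most three items of size at least $1/4$ (I will call these \emph{big} items, since four such items cannot share a bin), and its remaining items of size below $1/4$ sum to less than $5\varepsilon'$. Let $N_j$ denote the number of bad bins of \opt containing exactly $j$ big items, so $\optsize = N_1 + N_2 + N_3$ and the total volume of tiny items is strictly less than $5\varepsilon'\optsize$.

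The two bits attach one of four labels to each item: \emph{tiny} (size below $1/4$); \emph{opener} (a big item that is either alone in its \opt bin or the first-arriving of a pair in a 2-item bad bin); \emph{follower} (the second-arriving big item of a pair in a 2-item bad bin); and \emph{triple} (a big item of a 3-item bad bin). The offline oracle, knowing both the input order and an optimal packing, supplies these labels. The algorithm maintains three disjoint families of bins $A$, $B$, and $C$. Each opener opens a fresh bin in $A$; each follower is placed into some $A$-bin by Best-Fit; each triple is packed into $B$ by the Harmonic rule of Lemma \ref{str3}, with items of $(1/4,1/3]$ and $(1/3,1/2)$ routed into two parallel Next-Fit streams; each tiny item is placed by First-Fit across all currently open bins, opening a new $C$-bin only as a last resort.

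For the bin count, a mapping argument that extends the proof of Lemma \ref{str2} (treating a singleton as a ``pair whose follower never arrives'' and, whenever Best-Fit deviates from \opt's pairing, updating the correspondence between the remaining followers and the existing openers) shows that every follower fits into some $A$-bin, so $|A|=N_1+N_2$. Lemma \ref{str3} applies to the triples (three big items summing to at most $1$ each have size in $[1/4,1/2)$), giving $|B| \leq \frac{4}{3}N_3 + 3$. Hence $|A|+|B| \leq \frac{4}{3}\optsize + 3$, which already accounts for the $\frac{4}{3}$ term in the claim.

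The main obstacle is the bound $|C| \leq \frac{5\varepsilon'}{1-5\varepsilon'}\optsize + O(1)$ on the overflow bins for tiny items. The total volume of tiny items is less than $5\varepsilon'\optsize$ by construction, and whenever First-Fit opens a new $C$-bin on tiny item $x$, every previously opened bin has level exceeding $1-x$. Converting these observations into the required bound on $|C|$ calls for a careful volume accounting that separates the potentially under-filled $A$- and $B$-bins created before tiny overflow begins from the bins opened purely for tiny items, and that exploits the size ceiling $x<1/4$ together with the $5\varepsilon'\optsize$ volume cap. This residue-accounting step, rather than the routing of big items, is the delicate part of the argument.
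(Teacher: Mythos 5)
Your high-level decomposition (pairs via the exact-fit rule of Lemma~\ref{str2}, triples via the Harmonic analysis of Lemma~\ref{str3}, tiny items via First-Fit) matches the paper's, but there is a genuine flaw in how you route the tiny items, and it invalidates the two bounds you treat as already settled. By placing each tiny item by First-Fit ``across all currently open bins, opening a new $C$-bin only as a last resort,'' you let tiny items contaminate the $A$- and $B$-bins. An opener $x$ and its designated follower $y$ are only guaranteed to satisfy $x+y\leq 1$ (in \opt they may fill their bin almost completely), so a single tiny item lodged in $x$'s bin before $y$ arrives can make $y$ not fit, and the mapping argument of Lemma~\ref{str2} collapses; hence $|A|=N_1+N_2$ is not justified. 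Likewise, two items from $(1/3,1/2)$ that nearly fill a Next-Fit bin leave no room once a tiny item has been inserted, so the weight argument behind $|B|\leq \frac{4}{3}N_3+3$ also fails. The paper avoids this entirely by packing the tiny items in \emph{separate} bins, never mixing them with the bins that hold items larger than $1/4$.

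Ironically, the step you flag as ``the delicate part'' is the easy one. Once the tiny items are segregated, each has size less than $5\varepsilon'$ (by Definition~\ref{def1}, the items below $1/4$ in a bad bin have total size less than $5\varepsilon'$), so First-Fit fills every tiny-item bin except the last to a level above $1-5\varepsilon'$; since their total volume is below $5\varepsilon'\optsize$, at most $\frac{5\varepsilon'}{1-5\varepsilon'}\optsize+1$ such bins are opened, which is precisely the extra term in the statement. Two smaller remarks: tininess is determined by the item's size alone, so spending one of your four codewords on it is wasteful --- the paper instead reserves the codeword ``$00$'' for items of good bins, which is what lets this lemma be combined with Lemma~\ref{lemlin1} in the final theorem using only two bits per request overall (your encoding would force a third bit there); and the paper distinguishes type-$1$ from type-$3$ items with a single codeword by observing that singletons are larger than $1/2$ while items of three-item bins are at most $1/2$.
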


\begin{proof}
By the definition of bad bins, for any bin in the optimal packing,
all items are either smaller than $5 \varepsilon'$ or larger than $1/4$. We call the former group of items \textit{tiny} items and pack them separately using the \FF strategy. We refer to other items as \textit{normal items}. Consider an offline packing \optp which is the same as \opt's packing, except that all tiny items are removed from their bins and packed separately in new bins using the \FF strategy. This implies that the cost of \optp is larger than $\optsize$ by a multiplicative factor of at most $1+ \frac{5 \varepsilon'}{1-5 \varepsilon'}$. Let \pdd be the optimal packing for normal items.
Since all normal items are larger than $1/4$, each bin of \pdd contains at most three items. \cmnt{Note that we compare our algorithm with the optimal algorithm for normal items. P is not necessarily optimal for these items.} We say a bin of \pdd has type $i$ $(i \in \SET{1,2,3} $), if it contains $i$ normal items. Similarly, we say an item $x$ has type $i$ if it is packed in a type $i$ bin. All items in type 3 bins have sizes smaller than $1/2$ (otherwise one will have size at most $1/4$ which contradicts the assumption). Moreover, the sizes of the items in all type $1$ bins (except possibly the last one) are larger than $1/2$ (otherwise a better packing is achieved by pairing two of them). With two bits of advice, we can detect the type of an item as follows:
Let $b$ denote the two bits of advice with item $x$. If $b$ is ``$01$'' and $x>1/2$, then $x$ has type $1$; if $b$ is ``$01$'' and $x\leq 1/2$, then $x$ has type $3$; and if $b$ is ``$10$'' or $b$ is ``$11$'', then $x$ has type $2$. Note that the code ``$00$'' is not used at this point (this is used later on),
and the use of ``$10$'' and ``$11$'' is still to be detailed.

Let $X_i$ denote the number of bins of type $i$ ($1\leq i \leq 3)$. Hence, the cost of \pdd is $X_1 + X_2 + X_3$, and consequently the cost of $\optp$ is at least $ X_1 + X_2 + X_3 + X'$, where $X'$ is the number of bins filled by tiny items.
Consider an algorithm \alg that performs as follows. If an item $x$ has type $1$, \alg simply opens a new bin for $x$. If $x$ has type $2$, \alg applies the strategy of Lemma~\ref{str2} to place it in one of the bins maintained for items of type $2$. Recall that the advice in this case is either ``$10$'' or ``$11$'', so the second bit provides the advice required by Lemma~\ref{str2}. If $x$ has type $3$, \alg applies the Harmonic strategy to pack the item in a set of bins maintained for type $3$ items. By Lemma~\ref{str3}, the cost of \alg for these items is at most $4/3 X_3 +3$. Finally, \alg uses the \FF strategy to pack tiny items in separate bins. Consequently, the cost of the algorithm is at most $X_1+X_2+4/3X_3 + X'+3 \leq (1+ \frac{5 \varepsilon'}{1- 5\varepsilon'}) \optsize + X_3/3 + 3 \leq (4/3+ \frac{5 \varepsilon'}{1- 5\varepsilon'}) \optsize +3$. 
\qed
\end{proof}

Provided with the above results, we arrive at the following result:

\begin{theorem}
There is an online algorithm which receives two bits of advice per request,
plus an additive lower order term, and achieves a competitive ratio of $4/3 + \varepsilon$, for any positive value of $\varepsilon$.
\end{theorem}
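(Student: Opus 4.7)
The plan is to combine the algorithms from Lemmas~\ref{lemlin1}, \ref{str2}, \ref{str3}, and~\ref{lem4} by using a single 2-bit advice codeword per request to route each item into the appropriate sub-algorithm. Specifically, for each request $x$, the oracle writes the code ``$00$'' whenever $x$ lies in a good bin of the fixed optimal packing of $\sigma$ (per Definition~\ref{def1}); otherwise $x$ lies in a bad bin and the oracle writes one of the three codes ``$01$'', ``$10$'', ``$11$'' exactly as prescribed in the proof of Lemma~\ref{lem4}. This uses precisely $2n$ bits in the per-request portion of the tape. In addition, a prefix of the tape of length $o(n)$ is used to store the approximate sizes of the large items in good bins, at precision $\varepsilon'$, needed by the algorithm of Lemma~\ref{lemlin1}; this portion is delimited by a self-delimited length header, so the online algorithm knows where the per-request codes begin.

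The online algorithm maintains two disjoint collections of bins, a ``good'' collection and a ``bad'' collection. Upon reading an item $x$ together with its 2-bit code: if the code is ``$00$'', the algorithm routes $x$ to the good collection and treats it exactly as in Lemma~\ref{lemlin1} (using the prefix advice to recover the approximate packing of large items, then packing small items by First-Fit among the good bins); if the code is one of the other three, it routes $x$ to the bad collection and treats it exactly as in Lemma~\ref{lem4}, which in turn internally invokes Lemma~\ref{str2} or Lemma~\ref{str3} for type-$2$ and type-$3$ items, respectively, and First-Fit for tiny items. The two collections do not share any bins, so no bookkeeping interference occurs.

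For the analysis, let $\opt_g$ and $\opt_b$ denote the numbers of good and bad bins in the fixed optimal packing of $\sigma$, so $\opt(\sigma) = \opt_g + \opt_b$. The sub-sequence of good-bin items admits a packing into $\opt_g$ bins all of which are good in the sense of Definition~\ref{def1}, so Lemma~\ref{lemlin1} gives cost at most $\frac{4}{3}\opt_g + O(1)$ on the good collection. The sub-sequence of bad-bin items admits a packing into $\opt_b$ bad bins, so Lemma~\ref{lem4} gives cost at most $\bigl(\frac{4}{3} + \tfrac{5\varepsilon'}{1-5\varepsilon'}\bigr)\opt_b + 3$ on the bad collection. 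Summing,
\[
\alg(\sigma) \;\leq\; \Bigl(\tfrac{4}{3} + \tfrac{5\varepsilon'}{1-5\varepsilon'}\Bigr)\opt(\sigma) + O(1).
\]
Given $\varepsilon > 0$, choose $\varepsilon' < \min\{1/60,\ \varepsilon/(5+5\varepsilon)\}$ so that $\tfrac{5\varepsilon'}{1-5\varepsilon'} \leq \varepsilon$; this yields the advertised competitive ratio of $4/3 + \varepsilon$.

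The main obstacle to watch is bookkeeping rather than combinatorics: one has to check that the four disjoint advice codes suffice to carry both the good/bad flag and the refined routing needed by Lemma~\ref{lem4} (which is why the proof of that lemma deliberately left ``$00$'' unused), and that the $o(n)$ prefix for Lemma~\ref{lemlin1}'s approximate sizes, together with the $O(\log\log n)$ self-delimited header for its length, really is absorbed into the ``$o(n)$ additive'' term. Once these are in place, the theorem follows immediately from the three component lemmas.
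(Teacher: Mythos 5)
Your proposal is correct and follows essentially the same route as the paper: both split the items by good/bad bins of the optimal packing using the code ``00'' versus ``01''/``10''/``11'', invoke Lemma~\ref{lemlin1} and Lemma~\ref{lem4} on the two disjoint collections, and sum the two bounds; the only cosmetic difference is your choice of $\varepsilon'$ (the paper fixes $\varepsilon' = 11\varepsilon/60$, which gives the same guarantee).
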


\begin{proof}
Define $\varepsilon'$ to be $\frac{11 \varepsilon}{60}$. For $\varepsilon < 1/11$, we have $\varepsilon' < 1/60$. Moreover, we have $\frac{5 \varepsilon'}{1-5 \varepsilon'} \leq \frac{5 \varepsilon'}{1-1/12} = \frac{60 \varepsilon'}{11} = \varepsilon$. In an optimal packing, divide bins into good and bad bins using Definition~\ref{def1}.
 Also, let $Gd$ and $Bd$ respectively denote the number of good and bad bins. Use advice bits to distinguish items which are packed in good and bad bins, and pack them in separate lists of bins. More precisely, let the two bits of advice for an item $x$ be ``$00$'' if it is packed by \opt in a good bin, and apply Lemma~\ref{lemlin1} to pack these items in at most $4/3 Gd$ bins. Similarly, apply Lemma~\ref{lem4} to pack items from bad bins in at most $(4/3 + \frac{5 \varepsilon'}{1-5 \varepsilon'}) Bd +3 \leq (4/3 + \varepsilon) Bd +3$ bins, using bits of advice of the form ``$01$'', ``$10$'', or ``$11$'', as discussed in the proof of Lemma~\ref{lem4}. Consequently, the cost of the algorithm will be at most $4/3 Gd + (4/3 +\varepsilon) Bd +3 \leq (4/3 +\varepsilon) \optsize +3$. 
\qed
\end{proof}

\section{A Lower Bound for Linear Advice}
The \textit{GMP problem}~\cite{EmekFraKorRos2011} and the \textit{String Guessing Problem}~\cite{BocHroKom13} both contain a core special case of guessing a binary sequence.
We use their results to show that an online algorithm needs a linear number of bits of advice to achieve a competitive ratio better than $9/8$ for bin packing.

\begin{definition}[\cite{EmekFraKorRos2011,BocHroKom13}]
The {\em Binary String Guessing Problem with known history ($2$-SGKH)} is the following online problem. The input $I = (n, \sigma = \left\langle x_1, x_2, \ldots, x_n \right\rangle)$ consists of $n$ items that are either ``$0$'' or ``$1$'' and that are revealed one by one. For each item $x_t$, the online algorithm \alg must guess if it is a ``$0$'' or a ``$1$''. After the algorithm has made a guess, the value of $x_t$ is revealed to the algorithm.
\end{definition}

\begin{lemma}[\cite{BocHroKom13}] \label{servi}
On any input of length $n$, any deterministic algorithm for $2$-SGKH that is guaranteed to guess correctly on more than $\alpha n$ bits, for $1/2 \leq \alpha < 1$, needs to read at least $(1 + (1 - \alpha) \log(1 - \alpha) + \alpha \log \alpha) n$ bits of advice.
\end{lemma}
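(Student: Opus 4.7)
The plan is to establish Lemma~\ref{servi} by an information-theoretic counting argument over advice strings. Fix any deterministic algorithm $\alg$ for $2$-SGKH that reads $s = s(n)$ bits of advice and is guaranteed to guess correctly on more than $\alpha n$ bits for every input of length $n$, where $1/2 \leq \alpha < 1$. The goal is to show $s \geq n\bigl(1 + (1-\alpha)\log(1-\alpha) + \alpha \log \alpha\bigr)$, which is $n\bigl(1 - H(1-\alpha)\bigr)$, where $H(p) = -p\log p - (1-p)\log(1-p)$ is the binary entropy.

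The crux is a bijection between inputs and error patterns, tied to each advice string. For a fixed advice $\phi \in \{0,1\}^s$, let $g_t(\phi; x_1,\ldots,x_{t-1})$ denote $\alg$'s guess at step $t$; since $\alg$ is deterministic, this is a well-defined function of $\phi$ and the revealed history. For any input $x \in \{0,1\}^n$, define the error pattern $y(\phi,x) \in \{0,1\}^n$ by $y_t = 1$ if $g_t(\phi; x_1,\ldots,x_{t-1}) \neq x_t$ and $y_t = 0$ otherwise. Then for each fixed $\phi$, the map $x \mapsto y(\phi,x)$ is a bijection on $\{0,1\}^n$: inductively, $\phi$ and $x_1,\ldots,x_{t-1}$ determine $g_t$, and then $y_t$ uniquely determines $x_t$ (it equals either $g_t$ or its complement depending on $y_t$).

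Next, I would bound how many inputs each fixed $\phi$ can ``cover''. An input $x$ is covered by $\phi$ if $\alg$ with advice $\phi$ guesses correctly on more than $\alpha n$ bits, equivalently, the error pattern $y(\phi,x)$ has fewer than $(1-\alpha) n$ ones. Since $1-\alpha \leq 1/2$, the standard Hamming-ball volume bound gives
\[
\sum_{k < (1-\alpha)n} \binom{n}{k} \;\leq\; 2^{H(1-\alpha)\, n}.
\]
Combined with the bijection above, each $\phi$ covers at most $2^{H(1-\alpha)\,n}$ inputs.

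To conclude: since $\alg$ must meet its correctness guarantee on every one of the $2^n$ possible inputs of length $n$, the $2^s$ possible advice strings must collectively cover all of them, so $2^s \cdot 2^{H(1-\alpha)\,n} \geq 2^n$. Taking logarithms yields
\[
s \;\geq\; n\bigl(1 - H(1-\alpha)\bigr) \;=\; n\bigl(1 + (1-\alpha)\log(1-\alpha) + \alpha\log\alpha\bigr),
\]
which is exactly the bound claimed. The only genuinely subtle step is the bijection: one must observe that knowing $\phi$ together with the error pattern $y$ allows reconstruction of $x$ bit-by-bit without any extra information, because the guess $g_t$ at each step depends only on data that is already reconstructed by that point. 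The entropy estimate on Hamming balls is classical and the rest is bookkeeping, so I expect no further obstacles.
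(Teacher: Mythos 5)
Your argument is correct, but note that the paper does not actually prove Lemma~\ref{servi}; it imports it verbatim from the cited string-guessing paper, so there is no in-paper proof to compare against. Your proof is a sound self-contained derivation and follows the same strategy as the original source: fix an advice string $\phi$, observe that $x \mapsto y(\phi,x)$ (input to error pattern) is a bijection because the deterministic guess at step~$t$ is recoverable from $\phi$ and the already-reconstructed prefix, bound the number of inputs on which $\phi$ yields fewer than $(1-\alpha)n$ errors by the Hamming-ball volume $\sum_{k < (1-\alpha)n} \binom{n}{k} \leq 2^{H(1-\alpha)n}$ (valid since $1-\alpha \leq 1/2$), and conclude $2^{s} \cdot 2^{H(1-\alpha)n} \geq 2^n$ by a covering argument. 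The only point worth making explicit is that in the advice-on-tape model the algorithm's behavior is determined by the first $s(n)$ bits it could read, so there are indeed at most $2^{s(n)}$ distinct behaviors to union over; with that remark the bookkeeping is complete and the bound $s \geq n\bigl(1 + (1-\alpha)\log(1-\alpha) + \alpha\log\alpha\bigr)$ follows exactly as you state.
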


Since the number of bits needed to express the number of ``0''s in the input is at most $\lceil \log (n+1) \rceil \leq \log n + 1$, and this number can be given as advice by an oracle, if it is not given to the algorithm otherwise, we easily obtain the following lemma.
Recall that the definition of $e$, the length of the encoding function, is given
in Section~\ref{the-model}.

\begin{lemma} \label{servi2}
Consider instances of size $n$ of the $2$-SGKH problem in which the number of ``0''s is given to the algorithm as part of the input. For these instances, any deterministic algorithm that is guaranteed to guess correctly on more than $\alpha n$ bits, for $1/2 \leq \alpha < 1$, needs to read at least $(1 + (1 - \alpha) \log(1 - \alpha) + \alpha \log \alpha) n - e(n)$ bits of advice.
\end{lemma}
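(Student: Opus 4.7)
The plan is to reduce the setting of Lemma~\ref{servi2} to that of Lemma~\ref{servi}: since the restricted instances (in which the count of ``$0$''s is handed to the algorithm for free) are a strict weakening of the original $2$-SGKH adversary, any advice-efficient algorithm for the restricted variant yields an almost-as-advice-efficient algorithm for the unrestricted problem, at the price of writing the count of ``$0$''s onto the tape.

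Concretely, let $A$ be a deterministic algorithm that solves the restricted variant, uses $s(n)$ bits of advice on inputs of length $n$, and is guaranteed to guess correctly on more than $\alpha n$ bits. I would build an algorithm $A'$ for the general $2$-SGKH problem as follows. The oracle of $A'$ first writes a self-delimited encoding of the number $z$ of ``$0$''s, where $z \in \{0, 1, \ldots, n\}$, at the beginning of the tape, occupying at most $e(n)$ bits; then it appends the advice string that the oracle for $A$ would produce on the same input once $z$ is handed over. When invoked, $A'$ reads and parses the self-delimited prefix to recover $z$, and from that point on simulates $A$, fetching subsequent advice bits from the remainder of the tape exactly as $A$ would. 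Because self-delimited encoding is a prefix code, no additional bits are needed to locate the boundary between the two tape segments.

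Correctness and counting are then immediate. The algorithm $A'$ is deterministic and reproduces $A$'s guesses, so it too is guaranteed to guess correctly on more than $\alpha n$ bits, and it consumes at most $s(n) + e(n)$ advice bits. Applying Lemma~\ref{servi} to $A'$ yields
$s(n) + e(n) \geq (1 + (1-\alpha)\log(1-\alpha) + \alpha \log \alpha)\, n$,
and rearranging gives exactly the stated lower bound on $s(n)$. I do not expect any genuine obstacle here; the only detail worth checking is that $e(n)$ indeed suffices to encode any value in $\{0, 1, \ldots, n\}$, which follows from the monotonicity of the self-delimited length function and the formula for $e$ given in Section~\ref{the-model}.
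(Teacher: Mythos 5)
Your proposal is correct and takes essentially the same route as the paper: both prepend a self-delimited encoding of the number of ``0''s (costing at most $e(n)$ bits) to the advice tape, simulate the restricted-variant algorithm on a general $2$-SGKH instance, and invoke Lemma~\ref{servi} to obtain the bound. The only cosmetic difference is that the paper phrases the argument as a proof by contradiction while you argue directly; the content is identical.
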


\begin{proof}
Assume to the contrary that the statement is not true. Hence, there is an algorithm, \BSGA, that knows the number of ``0''s and receives fewer than $(1 + (1 - \alpha) \log(1 - \alpha) + \alpha \log \alpha) n - e(n)$ bits of advice while guessing correctly on more than $\alpha n$ bits. This algorithm can be used to serve arbitrary instances of the $2$-SGKH problem (in which the number of ``0''s is not known). Modify the advice tape used by the algorithm \BSGA so that it contains at most $e(n)$ additional bits at the beginning specifying the number of ``0''s.
(This can be done with the self-delimited encoding of the number of ``0''s.)
The algorithm for $2$-SGKH reads this number and gives it to \BSGA. Then it asks \BSGA for its guess for each bit in the sequence and answers the same as \BSGA. It also informs \BSGA of when it is correct and when it is wrong, with the same information it is given. The algorithm is correct exactly when \BSGA is correct.
The total number of advice bits will be less than $e(n) + (1 + (1 - \alpha) \log(1 - \alpha) + \alpha \log \alpha) n - e(n) = (1 + (1 - \alpha) \log(1 - \alpha) + \alpha \log \alpha) n$. However, Lemma~\ref{servi} implies that no algorithm can guess correctly on more than $\alpha n$ bits with this many bits of advice. In conclusion, the initial assumption
is incorrect and the statement holds. 
\qed
\end{proof}

In order to relate the Binary String Guessing Problem to the online bin packing problem, we introduce another problem called the Binary Separation Problem.

\begin{definition}
The {\em Binary Separation Problem} is the following online problem. The input $I = (n_1, \sigma= \left\langle y_1, y_2, \ldots, y_n \right\rangle)$ consists of $n = n_1+n_2$ positive values which are revealed one by one. There is a fixed partitioning of the set of items into a subset of $n_1$ {\em large} items and a subset of $n_2$ {\em small} items, so that all large items are larger than all small items. Upon receiving an item $y_i$, an online algorithm for the problem must guess if $y$ belongs to the set of small or large items. After the algorithm has made a guess, it is revealed to the algorithm whether $y_i$ actually belongs to class of small or large items.
\end{definition}

We provide reductions from the modified Binary String Guessing Problem to the Binary Separation Problem, and from the Binary Separation Problem to the online bin packing problem. In order to reduce a problem $P_1$ to another problem $P_2$, given an instance of $P_1$ defined by a sequence $\sigma_1$ and a set of parameters $\eta_1$ (such as the length of $\sigma_1$ or the number of ``0''s in it), we create an instance of $P_2$ which is defined by a sequence $\sigma_2$ and also a set of parameters $\eta_2$. In our reductions, we assume $\eta_2$ is derived from $\eta_1$, and since $\sigma_1$ is revealed in an online manner, $\sigma_2$ is created in an online manner by looking only at $\eta_1$ and the revealed items of $\sigma_1$.

\begin{lemma}\label{leman1}
Assume that there is an online algorithm that solves the Binary Separation Problem on sequences of length~$n$ with $b(n)$ bits of advice, and makes at most $r(n)$ mistakes. Then there is also an algorithm that solves the Binary String Guessing Problem on sequences of length $n$, assuming the number of ``0''s is given as a part of input, so that the algorithm receives $b(n)$ bits of advice and makes at most $r(n)$ errors.
\end{lemma}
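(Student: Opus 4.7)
The plan is to reduce the modified Binary String Guessing Problem to the Binary Separation Problem by simulating the hypothetical Binary Separation algorithm \BSA. Let \BSA use $b(n)$ advice bits and make at most $r(n)$ mistakes on sequences of length $n$. Given a 2-SGKH instance in which the number of ``0''s ($n_2$) is known as part of the input, the constructed 2-SGKH algorithm knows $n_1 = n - n_2$, so it has the parameter needed to instantiate a Binary Separation instance. The 2-SGKH algorithm passes its advice tape through to \BSA without modification, so the advice budget is exactly $b(n)$.

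The main subtlety is constructing the real values $y_t$ online, before $x_t$ is revealed, in a way that always yields a legal Binary Separation instance (i.e.\ at the end, every item labeled ``large'' strictly exceeds every item labeled ``small''). I would handle this by maintaining two numbers $\ell$ and $r$, initialized to $0$ and $1$, that represent respectively the largest small value and the smallest large value committed so far. For each round $t$, present $y_t = (\ell + r)/2$ to \BSA, record its guess, emit the corresponding bit as the 2-SGKH guess (large $\mapsto 1$, small $\mapsto 0$), then receive $x_t$ from the 2-SGKH oracle. If $x_t = 1$, declare $y_t$ to have been large and update $r \leftarrow y_t$; if $x_t = 0$, declare $y_t$ small and update $\ell \leftarrow y_t$. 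Finally, feed the corresponding correctness signal back to \BSA.

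The key verification is the invariant $\ell < r$: it holds initially, and after either update it is preserved because $y_t$ was strictly between $\ell$ and $r$. Consequently, at every step the partial sequence of $y_t$'s admits a threshold (any value in $(\ell, r)$) separating all items declared large from all items declared small, and at the end the produced instance contains exactly $n_1$ large and $n_2$ small items, matching the $n_1$ supplied to \BSA. Hence the simulation presents \BSA with a valid Binary Separation instance of length $n$ with the correct parameter.

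By construction, the 2-SGKH algorithm's guess on bit $t$ is correct if and only if \BSA's guess on $y_t$ is correct. Therefore the number of errors made by the 2-SGKH algorithm equals the number of mistakes made by \BSA, which is at most $r(n)$, while it uses the same $b(n)$ advice bits. I expect the only real obstacle to be verifying that the online construction of the $y_t$ always produces a legal instance; once the invariant $\ell < r$ is in place, the error-preserving correspondence is immediate.
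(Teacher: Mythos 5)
Your proposal is correct and matches the paper's proof essentially exactly: the paper also builds the Binary Separation instance online via a binary-search-style construction, presenting the midpoint of the interval between the largest committed ``small'' value and the smallest committed ``large'' value, so that no revealed answer can contradict earlier commitments, and the error and advice counts carry over one-to-one. The only differences are immaterial conventions (which bit is mapped to ``large''), and your explicit invariant $\ell < r$ is precisely the justification the paper sketches.
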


\begin{algorithm}[!t]
\begin{algorithmic}[1]
\item[]{The Binary Guessing algorithm knows the number of ``0''s ($n_1$) and passes it as a parameter (the number of large items) to the Binary Separation algorithm}
\STATE small = 0; large = 1 
\REPEAT
  \STATE mid = (large $-$ small) / 2
   \STATE class\_guess = SeparationAlgorithm.ClassifyThis(mid) \label{separationGuess}
   \IF{class\_guess = ``large''}
      \STATE bit\_guess = 0
   \ELSE
      \STATE bit\_guess = 1
   \ENDIF
   \STATE actual\_bit = Guess(bit\_guess) \COMMENT{The actual value is received after guessing ($2$-SGKH).}
   \IF{actual\_bit = 0}
      \STATE large = mid \COMMENT{We let ``large'' be the correct decision.}
   \ELSE
      \STATE small = mid \COMMENT{We let ``small'' be the correct decision.}
   \ENDIF
\UNTIL{end of sequence}
\end{algorithmic}
\caption{Implementing Binary String Guessing via Binary Separation.}
\label{algorithm-reduction-one}
\end{algorithm}

\begin{proof}
We assume that we have an algorithm \BSA that solves the Binary Separation Problem under the conditions of the lemma statement. Using that algorithm, we define the number $n_1$ of large items to be the number of ``0''s in the instance of the Binary String Guessing Problem. Then, we implement our algorithm \BSGA for the Binary String Guessing Problem as outlined in Algorithm~\ref{algorithm-reduction-one}, which defines the reduction. 
This \BSGA implementation, defined in Algorithm~\ref{algorithm-reduction-one}, functions as an adversary for \BSA, e.g., in Line 4, \BSGA gives \BSA its next request. Notice that we ensure that the \BSGA makes a correct guess if and only if \BSA makes a correct guess. The advice tape is filled with
bits of advice for this combined algorithm. The \BSGA uses the \BSA as
a sub-routine, but all the questions are effectively coming from the \BSA.

The set-up, reminiscent of binary search, is carried out as specified in the algorithm with the purpose of ensuring that when the \BSA is informed of the actual class of the item it considered, no result can contradict information already obtained. Specifically, the next item for the \BSA to consider is always in between the
largest item which has previously been deemed ``small'' and the smallest item which has previously been deemed ``large''. The fact that we give the middle item from that interval is unimportant; any value chosen from the open interval would work.
\qed
\end{proof}

Now, we prove that if we can solve a special case of the bin packing problem, we can also solve the Binary Separation Problem.
\begin{lemma}\label{leman2}
Consider the bin packing problem on sequences of length~$2n$
for which \opt opens $n$ bins.
Assume that there is an online algorithm \A that solves the problem on these instances with $b(n)$ bits of advice and opens at most $n+r(n)/4$ bins. Then there is also an algorithm \BSA that solves the Binary Separation Problem on sequences of length~$n$ with $b(n)$ bits of advice and makes at most $r(n)$ errors.
\end{lemma}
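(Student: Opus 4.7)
The plan is to give a reduction that converts $\A$ into $\BSA$. Given a Binary Separation input $\sigma_{BS} = \langle y_1, \ldots, y_n \rangle$, $\BSA$ will construct on-the-fly a bin packing sequence $\sigma'$ of length $2n$ with $\opt(\sigma') = n$, feed $\sigma'$ to $\A$ while forwarding the same advice string (so $\BSA$ uses $b(n)$ bits of advice too), and read off each BS guess from $\A$'s packing decisions.

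The bin packing instance consists of $n$ \emph{matched pairs} of items that each sum to exactly $1$: pair $i$ is of the form $(1/2 + \delta_i,\ 1/2 - \delta_i)$ for distinct small positive $\delta_1 < \delta_2 < \ldots < \delta_n$, chosen so that a non-matching above/below combination cannot share a bin (the above-half item $1/2 + \delta_i$ combined with the below-half item $1/2 - \delta_j$ overflows whenever $\delta_i > \delta_j$). Thus $\opt$ is forced to pair each above-half item with its matching below-half item, and $\opt(\sigma') = n$. For each BS query $y_i$, $\BSA$ presents the two items of pair $i$ to $\A$ in an order and with a correspondence (which item is ``query'' vs.\ ``partner'') determined by $\BSA$'s own tentative guess; after observing $\A$'s placements, $\BSA$ commits to that guess. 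The protocol is designed so that a correct guess corresponds to $\A$ packing the two items of pair $i$ in a single bin (matching $\opt$), while an incorrect guess forces $\A$ to split pair $i$ across two bins.

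The technical heart of the proof is then to show that $r$ BS errors by $\BSA$ force $\A$ to open at least $r/4$ extra bins, yielding $r(n) \le 4\,(\A(\sigma') - n)$. Each BS error splits one matched pair, leaving in $\A$'s packing an orphan above-half item and an orphan below-half item. Since two above-half items cannot share a bin (their sum exceeds $1$), each orphan above-half becomes a singleton bin, while two orphan below-half items can share a single ``small-small'' bin. Therefore $r$ split pairs contribute at most $r + \lceil r/2 \rceil$ bins in $\A$'s final packing, versus the $r$ bins $\opt$ uses for the same $2r$ items; this gives a surplus of at least $\lceil r/2 \rceil \ge r/4$ extra bins, as required. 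A bin-by-bin case analysis of $\A$'s final packing, partitioning the bins into matched-pair, small-small, singleton-above, and singleton-below types, makes this accounting rigorous.

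The main obstacle is specifying the presentation order within each pair and the exact guess-extraction rule so that the correspondence between BS errors and $\A$-pair-splits holds tightly, and in particular handling the boundary case of the last pair (whose second item might not yet be placed by $\A$ when $\BSA$ must commit its last guess). Once the protocol is pinned down, the counting is a routine classification of $\A$'s bin types.
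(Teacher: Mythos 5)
There is a genuine gap: your bin packing instance does not actually encode the Binary Separation input. In your construction the item sizes $1/2\pm\delta_i$ depend only on the index $i$, not on the value $y_i$, and both members of pair $i$ are handed to \A at query time. Since each pair sums to exactly $1$ and its two members identify each other by size, a trivial advice-free algorithm (e.g., Best-Fit) packs every pair into one bin and achieves the optimal $n$ bins; under your extraction rule (``pair packed together $\Leftrightarrow$ correct guess'') this would certify an errorless Binary Separation algorithm with no advice, which is impossible. The rule is also circular, since the presentation order is ``determined by \BSA's own tentative guess'' while the guess is supposed to be read off from \A's placement. The paper avoids this by making the query item's size a decreasing function $\frac{1}{2}-f(y_t)$ of the \emph{value} $y_t$, by supplying exactly $n_1$ receptor items of size $\frac{1}{2}+\varepsilon_{min}$ up front, and by deferring the complements of the items that turn out to be small to the end of the sequence; then ``pack with a receptor now'' versus ``open a bin and wait for a complement'' is exactly the large/small classification, and the structure of \opt genuinely depends on the hidden classes.

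The quantitative accounting also fails. You need a \emph{lower} bound on \A's extra bins in terms of the number $r$ of split pairs, but you derive it from an \emph{upper} bound (``at most $r+\lceil r/2\rceil$ bins, hence surplus at least $\lceil r/2\rceil$'' is a non sequitur). Worse, the claim that each orphan above-half item must occupy a singleton bin is false: $\frac{1}{2}+\delta_i$ fits with $\frac{1}{2}-\delta_j$ whenever $i<j$, so the $2r$ items of $r$ split pairs can be chained into bins $\SET{\frac{1}{2}+\delta_1,\frac{1}{2}-\delta_2},\ldots,\SET{\frac{1}{2}+\delta_{r-1},\frac{1}{2}-\delta_r}$ plus two singletons, i.e., $r+1$ bins and only one extra bin for $r$ ``errors.'' In the paper's construction the analogous re-pairings are blocked because a large query item cannot share a bin with the complement of a small item (their sizes sum to more than $1$ by monotonicity of $f$), which is what makes the ``four mistakes per extra bin'' charge valid.
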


\begin{proof}
In the reduction, we encode requests for the \BSA as items for bin packing.
Assume we are given an instance $I = (n_1, \sigma= \left\langle y_1, y_2, \ldots, y_n\right\rangle)$ of the Binary Separation problem, in which $n_1$ is the number of large items ($n_1+n_2 = n$), and the values of $y_t$s are revealed in an online manner $(1 \leq t \leq n)$. We create an instance of the bin packing problem which has length~$2n$. Algorithm~\ref{algorithm-reduction-two} shows the details of the reduction. The bin packing sequence starts with $n_1$ items of size $\frac{1}{2} + \varepsilon_{min}$ (in Algorithm~\ref{algorithm-reduction-two}, the variable ``NumberOfLargeItems'' is $n_1$ from the Binary Separation Problem). Any algorithm needs to open a bin for each of these $n_1$ items. We create the next $n$ items in an online manner, so that we can use the result of their packing to guess the requests for the Binary Separation Problem. Let $\tau = y_t$ $(1 \leq t \leq n)$ be a requested item of the Binary Separation Problem; we ask the bin packing algorithm to pack an item whose size is an increasing function of $\tau$, and slightly less than $\frac{1}{2}$. Depending on the decision of the bin packing algorithm for opening a new bin or placing the item in one of the existing bins, we decide the type of $\tau$ as being consecutively small or large. The last $n_2$ items of the bin packing instance are defined as complements of the items in the bin packing instance associated with small items in the binary separation instance (the complement of item $x$ is $1-x$). We do not need
to give the last items complementing the small items in order to implement the
algorithm, but we need them for the proof of the quality of the correspondence
that we are proving. 

Call an item in the bin packing sequence ``large'' if it is associated with large items in the Binary Separation Problem, and ``small'' otherwise.
For the bin packing sequence produced by the reduction, an optimal algorithm pairs each of the large items with one of the first $n_1$ items (those with size $\frac{1}{2} + \varepsilon_{min}$), placing them in the first $n_1$ bins. \opt pairs the small items with their complements, starting one of the next $n_2$ bins with each of these small items. Hence, the cost of an optimal algorithm is $n_1+n_2=n$.
The values $\varepsilon_{min}$ and $\varepsilon_{max}$ in Algorithm~\ref{algorithm-reduction-two} must be small enough so that
no more than two of any of the items given in the algorithm can fit together in a bin. No other restriction is necessary.

We claim that each extra bin used by the bin packing algorithm, but not by \opt, results in at most four mistakes made by the derived algorithm on the given instance of the Binary Separation Problem.
Consider an extra bin in the final packing of \A. This bin is opened by a large item which is incorrectly guessed as being small (bins which are opened by small items also appear in \opt's packing).  Note that large items do not fit in the same bins as complements of small items. The extra bin has enough space for another large item. Moreover, there are at most two small items which are incorrectly guessed as being large and placed in the space dedicated to the large items of the extra bin. Hence, there is an overhead of at least one for four mistakes. To summarize, \A has to decide if a given item is small or large and performs accordingly, and it pays a cost of at least $1/4$ for each incorrect decision. If \A opens at most $n + r(n)/4$ bins, the algorithm derived from \A for the Binary Separation Problem makes at most $r(n)$ mistakes.
\qed
\end{proof}

\begin{algorithm}[!t]
\begin{algorithmic}[1]
\STATE Choose $\varepsilon_{min}$ and $\varepsilon_{max}$ so that
 $0<\varepsilon_{min}<\varepsilon_{max}<\frac{1}{6}$
\STATE Choose a decreasing function
 $f\WEHAVE\REALS\rightarrow (\varepsilon_{min}..\varepsilon_{max})$
\FOR{i = 1 to NumberOfLargeItems}
 \STATE BinPacking.Treat($\frac{1}{2}+\varepsilon_{min}$) \COMMENT{The decision can only be to open a bin.}
\ENDFOR
\REPEAT
 \STATE Let $\tau$ be the next request
 \STATE decision = BinPacking.Treat($\frac{1}{2}-f(\tau)$)
 \IF{decision = ``packed with an $\frac{1}{2}+\varepsilon_{min}$ item''}
\STATE class\_guess = ``large''
 \ELSE
\STATE class\_guess = ``small''
 \ENDIF
 actual\_class = Guess(class\_guess)
 \IF{actual\_class = ``small''}
\STATE SmallItems.append($\frac{1}{2}-f(\tau)$) \COMMENT{Collecting small items for later.}
 \ENDIF
\UNTIL{end of request sequence}
\FOR{i = 1 to len(SmallItems)}
 \STATE BinPacking.Treat(1 $-$ SmallItems[i]) \COMMENT{The decision is not used.}
\ENDFOR
\end{algorithmic}
\caption{Implementing Binary Separation via Special Case Bin Packing.}
\label{algorithm-reduction-two}
\end{algorithm}


\begin{theorem}\label{sadt}
Consider the online bin packing problem on sequences of length~$n$. To achieve a competitive ratio of $c$ $(1 < c < 9/8)$, an online algorithm needs to receive at least $(n (1+ (4c -4) \log (4c -4) + (5-4c) \log (5-4c)) - (\lceil \log (n+1) \rceil+ 2\lceil\log (\lceil\log (n+1) \rceil+1) \rceil + 1))/2$ bits of advice.
\end{theorem}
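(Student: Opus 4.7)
The plan is to chain together the three reductions developed earlier in the section: \text{bin packing} $\;\to\;$ \text{Binary Separation} (Lemma~\ref{leman2}) $\;\to\;$ \text{Binary String Guessing with known number of zeros} (Lemma~\ref{leman1}) $\;\to\;$ \text{advice lower bound} (Lemma~\ref{servi2}). The strategy is to restrict attention to precisely the family of hard bin packing instances produced by the reduction of Lemma~\ref{leman2}, for which $\opt = n/2$, and then trace how the competitive ratio translates into a guessing accuracy.

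First, I would assume a $c$-competitive online bin packing algorithm \A that uses $b(n)$ bits of advice on sequences of length $n$. On the length-$n$ instances from Lemma~\ref{leman2} (instantiated with $2m = n$, so $m = n/2$ and $\opt = n/2$), the ratio $c$ forces \A to open at most $c \cdot n/2$ bins, i.e., at most $(c-1)\,n/2$ extra bins beyond \opt. By Lemma~\ref{leman2}, this corresponds to a Binary Separation algorithm \BSA on sequences of length $n/2$ that uses the same $b(n)$ bits of advice and makes at most $4\cdot(c-1)\,n/2 = 2(c-1)\,n$ mistakes. Feeding \BSA into Lemma~\ref{leman1} yields a Binary String Guessing algorithm (with the number of zeros given) on sequences of length $n/2$, with $b(n)$ bits of advice and at most $2(c-1)\,n$ errors; equivalently, it guesses correctly on at least $(n/2) - 2(c-1)\,n = (5-4c)\,(n/2)$ items.

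Set $\alpha := 5 - 4c$, so $1 - \alpha = 4c - 4$. The theorem's hypothesis $1 < c < 9/8$ corresponds exactly to $1/2 < \alpha < 1$, which is the parameter range required by Lemma~\ref{servi2}. Applying Lemma~\ref{servi2} to the derived guessing algorithm of length $n/2$ then gives
\[
b(n) \;\geq\; \bigl(1 + (4c-4)\log(4c-4) + (5-4c)\log(5-4c)\bigr)\cdot \tfrac{n}{2} \;-\; e(n/2),
\]
which, after substituting the definition of $e(\cdot)$ and absorbing the sub-logarithmic discrepancy between $e(n/2)$ and the encoding term written in the theorem, produces the stated bound.

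The main obstacle is bookkeeping rather than any new idea: I need to verify that the factor-of-four loss in Lemma~\ref{leman2} combines correctly with the length halving from the bin packing reduction so that the fraction of correct Binary Separation guesses comes out to $\alpha = 5 - 4c$ and the sequence length for the application of Lemma~\ref{servi2} is $n/2$, not $n$. I also need to check that the additive $+c_0$ slack permitted in the definition of $c$-competitiveness is harmless: it contributes only an $O(1)$ increase in the extra-bin count, which translates into $O(1)$ additional Binary Separation errors and therefore perturbs $\alpha$ by $O(1/n)$, leaving the linear term of the lower bound intact.
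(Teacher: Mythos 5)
Your proposal is correct and follows essentially the same route as the paper's own proof: restrict to the hard instances with $\opt = n/2$, chain Lemma~\ref{leman2} and Lemma~\ref{leman1} to obtain a string-guessing algorithm on sequences of length $n/2$ with at most $2(c-1)n$ errors, and apply Lemma~\ref{servi2} with $\alpha = 5-4c$. Your explicit handling of the additive constant $c_0$ and of the lower-order discrepancy between $e(n/2)$ and $e(n)/2$ is in fact slightly more careful than the paper, which glosses over both points.
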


\begin{proof}
Consider a bin packing algorithm \A that receives $b(n)$ bits of advice and achieves a competitive ratio of $c$. This algorithm opens at most $(c-1) \optsize$ bins more than \opt, so when $\optsize = n/2$, it opens at most $(c-1)n/2$ more bins. By Lemma~\ref{leman2}, the existence of such an algorithm implies that there is an algorithm \A that solves the Binary Separation Problem on sequences of length~$n/2$ with $b$ bits of advice and makes at most $2(c-1)n$ errors. By Lemma~\ref{leman1}, this implies that there is an algorithm \B that solves the Binary String Guessing Problem on sequences of length~$n/2$ with $b$ bits of advice and makes at most $2(c-1)n$ mistakes, i.e., it correctly guesses the other $n/2 - 2(c-1)n = (5-4c) n/2$ items. Let $\alpha = 5-4c$, and note that $\alpha$ is in the range $[1/2,1)$ when $c$ is in the range $(1,9/8]$. Lemma~\ref{servi2} implies that in order to correctly guess more than $\alpha n/2$ of the items in the binary sequence, we must have $b(n)$ larger than or equal to $((1 + (1 - \alpha) \log(1 - \alpha) + \alpha \log \alpha) n - e(n) )/2$. Replacing $\alpha$ with $5-4c$ completes the proof. 
\qed
\end{proof}

Thus, to obtain a competitive ratio strictly better than $9/8$, a linear number of bits of advice is required. For example, to achieve a competitive ratio of $17/16$, at least $0.188n$ bits of advice are required asymptotically.

\begin{corollary}
Consider the bin packing problem for packing sequences of length~$n$. To achieve a competitive ratio of $9/8-\delta$, in which $\delta$ is a small, but fixed positive number, an online algorithm needs to receive $\Omega(n)$ bits of advice.
\end{corollary}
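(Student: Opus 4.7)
The plan is to derive this corollary as an essentially immediate consequence of Theorem~\ref{sadt} by plugging in $c = 9/8 - \delta$ and analyzing the asymptotic behavior of the resulting lower bound expression. First I would substitute: with $c = 9/8 - \delta$, the two quantities inside the logarithms become $4c - 4 = 1/2 - 4\delta$ and $5 - 4c = 1/2 + 4\delta$. The coefficient of $n$ in the lower bound of Theorem~\ref{sadt} then becomes
\[
1 + \left(\tfrac{1}{2} - 4\delta\right)\log\left(\tfrac{1}{2} - 4\delta\right) + \left(\tfrac{1}{2} + 4\delta\right)\log\left(\tfrac{1}{2} + 4\delta\right),
\]
which is exactly $1 - H\!\left(\tfrac{1}{2} - 4\delta\right)$, where $H$ denotes the binary entropy function.

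Next I would argue that for every fixed $\delta > 0$ this coefficient is a strictly positive constant. Since $H(p)$ is strictly concave on $[0,1]$ with unique maximum $H(1/2)=1$, and since $\tfrac{1}{2} - 4\delta \neq \tfrac{1}{2}$ for $\delta > 0$, we obtain $H(1/2 - 4\delta) < 1$, so the coefficient is some positive constant $c_\delta > 0$ depending only on $\delta$. (For $\delta$ close to $0$ a standard second-order expansion of $H$ around $1/2$ shows $c_\delta = \Theta(\delta^2)$, but we only need $c_\delta > 0$.)

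Finally, I would handle the subtractive correction term. The expression $e(n) = \lceil \log(n+1)\rceil + 2\lceil\log(\lceil\log(n+1)\rceil+1)\rceil + 1$ is $O(\log n)$, hence dominated by the linear term $c_\delta \cdot n$ for all sufficiently large $n$. Dividing by $2$ (as in the statement of Theorem~\ref{sadt}) preserves the linear asymptotics, so the lower bound on the number of advice bits is at least $(c_\delta / 2)\,n - O(\log n) = \Omega(n)$, establishing the corollary.

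There is essentially no obstacle: the only point that needs a one-line justification is that the entropy-like expression is strictly positive for any fixed $\delta > 0$, which follows from the strict concavity of $H$ at $1/2$. No new machinery or constructions are required beyond Theorem~\ref{sadt}.
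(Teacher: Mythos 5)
Your proposal is correct and matches the paper's (implicit) argument: the corollary is stated as a direct consequence of Theorem~\ref{sadt}, obtained by substituting $c = 9/8 - \delta$ and noting that the resulting coefficient of $n$ is a positive constant while the $e(n)$ term is only $O(\log n)$. Your identification of the coefficient as $1 - H(1/2 - 4\delta)$ and the appeal to strict concavity of the binary entropy is a clean way to justify positivity, and nothing further is needed.
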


\section{Concluding Remarks}
We conjecture that a sublinear number of bits of advice is enough to achieve competitive ratios smaller than $4/3$. Note that our results imply that we cannot hope for ratios smaller than $9/8$ with sublinear advice.







\bibliography{confshort,online}

\newpage
\appendix

\end{document}